\newcommand{\NamedRuleOL}[3]{{\tiny{\textsc{(#1)}}}
\displaystyle                  %  #1 = nome regola
{%
\begin{array}{l}#2\end{array}}\           
\begin{array}{l}#3\end{array}     %  #3 = side conditions (modo math)
}
\newtheorem{theorem}{Theorem}
\newtheorem{corollary}{Corollary}
\newtheorem{lemma}{Lemma}
\newcommand{\refToFigure}[1]{Fig.\ref{fig:#1}}
\newcommand{\refToSection}[1]{Sect.\ref{sect:#1}}
\newcommand{\refToLemma}[1]{Lemma~\ref{lemma:#1}}
\newcommand{\refToTheorem}[1]{Theorem~\ref{theo:#1}}
\newcommand{\Space}{\hskip 0.4em}
\newcommand{\BigSpace}{\hskip 1.5em}
\definecolor{light-gray}{gray}{0.80}
\newcommand{\Tuple}[1]    {\langle{#1}\rangle}
\newcommand{\Pair}[2]     {\Tuple{{#1},{#2}}}
\newcommand{\Reduct}[2]{#1_{|#2}}
\newcommand{\NamedRule}[4]{{\tiny{\textsc{(#1)}}}
\displaystyle                  %  #1 = nome regola
\frac{%
\begin{array}{l}#2\end{array}%
}{#3}        %  #2 = premesse (modo math) %  #3 = conseguenza (modo math)
{{\footnotesize \begin{array}{l}#4\end{array}}}    %  #4 = side conditions (modo math)
}
\newcommand{\rn}[1]{{\scriptsize (\textsc{#1})}}					% Rule name
\newenvironment{grammatica}{$\begin{array}{lcll}}{\end{array}$}
\newcommand{\produzione}[3]{#1&::=&#2&\mbox{#3}}
\newcommand{\seguitoproduzione}[2]{&&#1&\mbox{#2}}
\newcommand{\terminale}[1]{\texttt{#1}}
\newcommand{\metavariable}[1]{{\it #1}}
\newcommand{\x}{\metavariable{x}}
\newcommand{\y}{\metavariable{y}}
\newcommand{\z}{\metavariable{z}}
\newcommand{\xs}{\metavariable{xs}}
\newcommand{\m}{\metavariable{m}}
\newcommand{\e}{\metavariable{e}}
\newcommand{\etilde}{\hat{\e}}
\newcommand{\estilde}{\widehat{\es}}
\newcommand{\iotas}{\iota_1,\dots,\iota_n}
\newcommand{\ctxHat}{\widehat{\ctx}}
\newcommand{\es}{\metavariable{es}}
\newcommand{\C}{\metavariable{C}}
\newcommand{\D}{\metavariable{D}}
\newcommand{\f}{\metavariable{f}}
\newcommand{\dec}{\metavariable{d}}
\newcommand{\decs}{\metavariable{ds}}
\newcommand{\decstilde}{\widehat{\decs}}
\newcommand{\dectilde}{\widehat{\dec}}
\newcommand{\this}{\terminale{this}}
\newcommand{\Field}[2]{#1\ #2}
\newcommand{\FieldAccess}[2]{#1\terminale{.}#2}
\newcommand{\MethCall}[3]{#1{\terminale{.}}#2\terminale{(}#3\terminale{)}}
\newcommand{\FieldAssign}[3]{#1\terminale{.}#2\terminale{=}#3}
\newcommand{\ConstrCall}[2]{\terminale{new}\ #1\terminale{(}#2\terminale{)}}
\newcommand{\Block}[2]{\terminale{\{}#1\;#2\terminale{\}}}
\newcommand{\Dec}[3]{#1\,#2\terminale{=}#3;}
\newcommand{\Param}[2]{#1\ #2}
\newcommand{\Sequence}[2]{#1\terminale{;}#2}
\newcommand{\ExpMem}[2]{#1{\mid}#2}
\newcommand{\capsule}{\terminale{a}}
\newcommand{\Type}[2]{#2^{#1}}
\newcommand{\qualifier}{\metavariable{q}}
\newcommand{\T}{\metavariable{T}}
\newcommand{\X}{\metavariable{X}} %insieme di variabili
\newcommand{\Y}{\metavariable{Y}} %insieme di variabili
\newcommand{\BlockLab}[3]{ \terminale{\{}^{#3} #1\;#2\terminale{\}}}
\newcommand{\IsCapsule}[1]{\aux{caps}(#1)}
\newcommand{\dv}{\metavariable{dv}}
\newcommand{\dvs}{\metavariable{dvs}}
\newcommand{\ctx}{{\cal{E}}}
\newcommand{\ctxP}{{\cal E}'}
\newcommand{\Ctx}[1]{\ctx[#1]}
\newcommand{\CtxP}[1]{\ctxP[#1]}
\newcommand{\CtxHat}[1]{\ctxHat[#1]}
\newcommand{\valCtx}{\ctx_v}
\newcommand{\ValCtx}[1]{\valCtx[#1]}
\newcommand{\blockCtx}{\ctx_b}
\newcommand{\blockCtxHat}{\widehat{\ctx_b}}
\newcommand{\BlockCtx}[1]{\blockCtx[#1]}
\newcommand{\BlockCtxHat}[1]{\blockCtxHat[#1]}
\newcommand{\emptyctx}{[\ ]}
\newcommand{\val}{\metavariable{v}}
\newcommand{\vs}{\metavariable{vs}}
\newcommand{\creduce}[2]{#1\Longrightarrow#2}
\newcommand{\creducestar}[2]{#1\Longrightarrow^\star#2}
\newcommand{\reduce}[2]{#1\longrightarrow^{}#2}
\newcommand{\reduceOS}[2]{#1{\stackrel{\aux{c}}{\longrightarrow}}#2}
\newcommand{\reduceN}[3]{#1\longrightarrow^{#3}#2}
\newcommand{\updateCtx}[3]{\blockCtx^{#2.#3{=}#1}}
\newcommand{\UpdateMem}[4]{#1^{#2.#3=#4}}
\newcommand{\UpdateCtx}[4]{\updateCtx{#1}{#2}{#3}[#4]}
\newcommand{\reducestar}[2]{#1\longrightarrow^\star#2}
\newcommand{\Subst}[3]{#1[#2/#3]}
\newcommand{\SubstDot}[5]   {#1[#2/#3\dots#4/#5]}
\newcommand{\congruence}[2]{{#1}\cong{#2}}
\newcommand{\aux}[1]{\textsf{#1}}
\newcommand{\fields}[1]{\aux{fields}(#1)}
\newcommand{\method}[2]{{\aux{meth}(#1,#2)}}
\newcommand{\FV}[1]{\aux{FV}(#1)}
\newcommand{\HB}[1]{\aux{HB}(#1)}
\newcommand{\Inner}[1]{\aux{inner}(#1)}
\newcommand{\dom}[1]{\aux{dom}(#1)}
\newcommand{\img}[1]{\aux{img}(#1)}
\newcommand{\extractDec}[2]{\aux{get}(#1,#2)}
\newcommand{\correspondence}[3]{#1\stackrel{_{#2}}{\leftrightsquigarrow}#3}
\title{A Syntactic Model of Mutation and Aliasing}
\author{Paola Giannini\thanks{This original research has the financial support of the Universit\`a  del Piemonte Orientale.}
\institute{Computer Science Institute, DiSIT, Universit\`{a} del Piemonte Orientale\\
Italy}
\email{paola.giannini@uniupo.it}
\and
Marco Servetto
\institute{School of Engineering and Computer Science,Victoria University of Wellington\\
New Zealand}
\email{servetto@ecs.vuw.ac.nz}
\and Elena Zucca
\institute{DIBRIS, Universit\`a di Genova\\
Italy}
\email{elena.zucca@unige.it}
}
\begin{document}
\maketitle

\begin{abstract}
Traditionally, semantic models of imperative languages use an auxiliary structure which mimics memory. In this way, ownership and other encapsulation properties need to be reconstructed from the graph structure of such global memory. 
We present an alternative \emph{syntactic} model where memory is encoded as part of the program rather than as a separate resource. This means that execution can be modelled by just rewriting source code terms, as in semantic models for functional programs. 
Formally, this is achieved by the block construct, introducing local variable declarations, which play the role of memory when their initializing expressions have been evaluated. 
In this way, we obtain a language semantics which directly represents at the syntactic level constraints on aliasing, allowing simpler reasoning about related properties. 
To illustrate this advantage, we consider the issue, widely studied in the literature, of characterizing an isolated portion of memory, 
which cannot be reached through external references. In the syntactic model, closed block values, called \emph{capsules}, provide a simple representation of isolated portions of memory, and capsules can be safely moved to another location in the memory, without introducing sharing, by means of \emph{affine} variables.
We {prove that the syntactic model can be encoded in the conventional one, hence efficiently implemented.}
\end{abstract}

% !TEX root =main.tex
\section{Introduction}
In languages with state and  mutations,  keeping control of aliasing relations is a key issue for correctness. This is  {exacerbated} by  concurrency mechanisms, since side-effects in one thread can affect the behaviour of another thread, hence unpredicted aliasing can induce unplanned/unsafe communication.

For these reasons, the last few decades have seen considerable interest in type systems for controlling sharing and interference, to make programs easier to maintain and understand, notably using \emph{qualifiers} to restrict the usage of references \cite{ZibinEtAl10,GordonEtAl12,NadenEtAl12,ClebschEtAl15}.

In particular, in an ongoing stream of work \cite{ServettoZucca15,GianniniSZ16,GianniniSZ17,GianniniSZ17a,
GianniniSZ18,GianniniRSZ19a,GianniniSZC19b}, we have adopted an innovative execution model \cite{ServettoLindsay13,CapriccioliSZ15} for imperative languages which, 
differently from traditional  ones, {is a \emph{reduction relation on language terms}. That is, we do not add an auxiliary structure which mimics 
physical memory, but such structure is encoded in the language itself.} Whereas this makes no difference from a programmer's point of view, it is important on the foundational side, since, as will be informally illustrated below, language semantics directly represents at the syntactic level constraints on aliasing, allowing simpler reasoning about related properties. 

In this paper, we focus on the operational model itself, rather than on type systems, and formalize its relation with the conventional model, where an auxiliary global structure mimics memory. 

To informally introduce this syntactic calculus, we show examples of reduction sequences.
The main idea is to use variable declarations to directly represent the memory.
That is, a declared variable is not replaced by its value, as in
standard \texttt{let}, but the association is kept and used when necessary, as
it happens, with different aims and technical problems, in cyclic lambda
calculi \cite{AriolaFelleisen97,MaraistEtAl98,AriolaBlom02}.
Assuming a program  (class table) where class \lstinline{C} has two fields
\lstinline{f1} and \lstinline{f2} of type \lstinline{D}, and class
\lstinline{D} has a field \lstinline{f} of type \lstinline{D}, the term
\begin{lstlisting}[basicstyle=\ttfamily\footnotesize,backgroundcolor=\color{white}]
D x=new D(y); D y=new D(x); C w={D z=new D(z); x.f=x; new C(z,z)}; w.f1
\end{lstlisting} 
starts with two declarations that can be seen as a memory consisting of two mutually referring objects.
Then there is a declaration whose right-hand-side needs to be evaluated and the final expression returns 
the value of a field of the object associated with this last variable. The reduction of the term is as follows.

\begin{lstlisting}[basicstyle=\ttfamily\footnotesize,backgroundcolor=\color{white}]
$\store{D x=new D(y); D y=new D(x);}$ C w={D z=new D(z); x.f=x; new C(z,z)}; w.f1 $\longrightarrow$
$\store{D x=new D(x); D y=new D(x);}$ C w={D z=new D(z); new C(z,z)}; w.f1 $\longrightarrow$
$\store{D x=new D(x); D y=new D(x); D z=new D(z);}$ C w={new C(z,z)}; w.f1 $\cong$
$\store{D x=new D(x); D y=new D(x); D z=new D(z); C w=new C(z,z);}$ w.f1 $\longrightarrow$
$\store{D x=new D(x); D y=new D(x); D z=new D(z); C w=new C(z,z);}$ z $\longrightarrow$
$\store{D z=new D(z);}$ z
\end{lstlisting} 

Evaluation proceeds left to right. {We emphasize at each step the declarations which can be seen
as the store (in grey).}
We start evaluating the right-hand-side of the declaration for \lstinline{w},
by updating the field \lstinline{f} of \lstinline{x}. Then, in order to evaluate  the field access \lstinline{w.f1}, {we need to move the declaration of \lstinline{z} outside of the inner block. A block with no declarations is considered equivalent to its body, as expressed by $\cong$.}
Now the field access \lstinline{w.f1} can be performed, getting \lstinline{z}, 
and the last step removes declarations (memory) which are not reachable from \lstinline{z}, giving as final result a memory consisting of only one cyclic object.

To illustrate how aliasing constraints are directly represented in this syntactic model, we consider an important example: how to characterize an isolated portion of memory, 
which cannot be reached through external references. {This allows programmers (and static analyses)
to identify portions of memory that can be safely handled by a thread.} This property has been widely studied in the literature, under different names and variants, such as \emph{isolated} \cite{GordonEtAl12},
\emph{unique} \cite{Boyland10} and \emph{externally
unique}~\cite{ClarkeWrigstad03}, \emph{balloon}
\cite{Almeida97}, \emph{island} \cite{DietlEtAl07}.  

In the syntactic calculus, block values with no free variables, called \emph{capsules}, provide a simple representation of portions of memory which are trivially isolated.
For instance, in the following example:
\begin{lstlisting}[basicstyle=\ttfamily\footnotesize,backgroundcolor=\color{white}]
$\rn{ex1}\BigSpace\store{D x=new D(x); D y=new D(x);}$ C$^\capsule$ w={D z=new D(z); new C(z,z)}; x.f=x
\end{lstlisting} 
the right-hand side of the declaration of \lstinline{w} is a capsule. To allow the programmer to \emph{safely rely} on the fact that \lstinline{w} denotes an isolated portion of memory, which nobody else in the program can affect, we introduce the qualifier $\capsule$, as shown above, for \emph{affine} variables/parameters, which should be initialized with capsules.

If the term is, instead:
\begin{lstlisting}[basicstyle=\ttfamily\footnotesize,backgroundcolor=\color{white}]
$\rn{ex2}\BigSpace\store{D x=new D(x); D y=new D(x);}$ C$^\capsule$ w={D z=new D(z); new C(z,y)}; x.f=x
\end{lstlisting} 
then the inner block is \emph{not} a capsule, hence its use to inizialize an affine variable is an error to be prevented, otherwise a programmer using \Q@w@ would erroneously rely on the capsule property. In the syntactic calculus, this error can be easily detected at runtime in a modular way, that is, by looking only at the block itself. For instance, in \rn{ex1} the runtime check succeeds, whereas in  \rn{ex2} it fails, hence normal execution cannot proceed, since the block contains the free variable \lstinline{y}.\footnote{In the formalization presented in this paper, the term is simply stuck. A different choice could be to introduce explicit \emph{error} terms. In our previous work \cite{ServettoZucca15,GianniniSZ16,GianniniSZ17,GianniniSZ17a,
GianniniSZ18} we have designed type systems which are able to statically prevent such error.} 

Note that, in the conventional calculus, detecting that a reference  denotes an isolated portion of memory requires, instead, a dependency analysis involving surrounding code and memory.  
For instance, in the examples \rn{ex1} and \rn{ex2} above, execution will reach respectively the following step:
\begin{quote}
$\rn{ex1}\BigSpace\ExpMem{\texttt{C$^\capsule$ w = new C($\iota_\z$,$\iota_\z$); $\iota_\x$.f=$\iota_\x$}}{\mu}$\\
$\rn{ex2}\BigSpace\ExpMem{\texttt{C$^\capsule$ w = new C($\iota_\z$,$\iota_\y$); $\iota_\x$.f=$\iota_\x$}}{\mu}$
\end{quote}
where the domain of the memory $\mu$ are \emph{object identifiers} $\iota$, modeling \emph{global} names which, differently from variables, do not support shadowing and $\alpha$-renaming.
Here $\mu=\iota_\x\mapsto\ConstrCall{\D}{\iota_\x}, \iota_\y\mapsto\ConstrCall{\D}{\iota_\x},\iota_\z\mapsto\ConstrCall{\D}{\iota_\z}$. It is clear that the two situations cannot be distinguished by only looking locally at the initialization expression of \lstinline{w}. Instead, we must check that the memory portion reachable from such initialization expression is isolated, that is, cannot be reached from references used in other parts of the program.
This is true for \rn{ex1}, since there is no sharing between $\iota_\z$ and the reference $\iota_\x$ used in the external code \texttt{$\iota_\x$.f=$\iota_\x$}. 
For \rn{ex2}, instead, there is sharing between $\iota_\y$ and $\iota_\x$, and indeed the execution of such external code affects the  portion of  memory reachable from \texttt{new C($\iota_\y$,$\iota_\z$)}.
In the general case, detecting sharing through dependency analysis is an expensive check, which could even be impossible in a distributed environment.

As said above, by using an affine variable/parameter $\x$ the programmer can safely rely on the fact that $\x$ denotes an isolated portion of memory. On the other hand, the capsule property should be preserved when $\x$ is used. This is ensured by the constraint that affine variables/parameters can be used at most once, and by a special reduction semantics, motivated and described below.

As already shown, for (non-affine) variable declarations reduction is as follows:
\begin{lstlisting}[basicstyle=\ttfamily\footnotesize,backgroundcolor=\color{white}]
$\store{D x=new D(x); D y=new D(x);}$ C w={D z=new D(z); new C(z,z)}; w.f1 $\longrightarrow$
$\store{D x=new D(x); D y=new D(x); D z=new D(z);}$ C w={new C(z,z)}; w.f1 $\cong$
$\store{D x=new D(x); D y=new D(x); D z=new D(z); C w=new C(z,z);}$ w.f1 $\longrightarrow$ $\ldots$
\end{lstlisting} 
That is, the block is flattened, hence the capsule property is lost. This corresponds to the fact that, if we get access to a portion of memory through an ordinary variable, then sharing could be introduced through such variable, hence, in particular, a portion of memory which was isolated is not guaranteed to remain such. 
To preserve the property, affine variables have a special semantics, which allow a capsule to be moved to another location in the memory, or passed as argument to a method.
In the above example, reduction would be as follows:
\begin{lstlisting}[basicstyle=\ttfamily\footnotesize,backgroundcolor=\color{white}]
$\store{D x=new D(x); D y=new D(x);}$ C$^\capsule$ w={D z=new D(z); new C(z,z)}; w.f1 $\longrightarrow$
$\store{D x=new D(x); D y=new D(x);}$ {D z=new D(z); new C(z,z)}.f1 $\longrightarrow$
$\store{D x=new D(x); D y=new D(x);}$ {D z=new D(z); new C(z,z).f1} $\longrightarrow$
$\store{D x=new D(x); D y=new D(x);}$ {D z=new D(z); z} $\longrightarrow$
$\store{D z=new D(z);}$ z
\end{lstlisting} 
Differently from the previous reduction, the block occurring as right-hand side of the declaration of \lstinline{w} is \emph{not} flattened, but, rather, used to replace \lstinline{w}. That is, affine variables have a substitution semantics. Then, the field access is propagated inside the block to be performed.

In \refToSection{conventional} and \refToSection{syntactic} we define the conventional and the syntactic calculus, respectively. 
In \refToSection{preservation} we show that the syntactic model can be encoded in the conventional one, hence efficiently implemented, and {prove} that the dynamic semantics is preserved by the encoding.
Finally, in \refToSection{conclu} we draw some conclusions.

% !TEX root =main.tex
\section{The conventional calculus}\label{sect:conventional}
We illustrate our approach in the context of calculi with an object-oriented flavour, inspired {by}
Featherweight Java \cite{IgarashiEtAl01} (FJ for short). This is only a presentation choice:
the ideas and results of the paper could be rephrased in
different imperative calculi, e.g., supporting data type constructors and
reference types. For the same reason, we omit features such as inheritance and
late binding, which are orthogonal to our focus.

The conventional calculus is given in \refToFigure{conventional}.  It is similar to other imperative variants of FJ which can be found in the literature \cite{AhernY05,BiermanP03,BettiniDS10}.
 We assume sets
of \emph{variables} $\x, \y, \z$, \emph{class names} $\C, \D$, \emph{field
names} $\f$, and \emph{method names} $\m$.  We adopt the convention that a
metavariable which ends in \metavariable{s} is implicitly defined as a
(possibly empty) sequence in which elements may or may not be separated by commas. 
In particular, $\decs$ (and $\dvs$ in the next section) are sequences of $\dec$ (and $dv$)
and $\es$, $\vs$ and $\xs$ are sequences of $\e$, $\val$ and $\x$ separated by commas.

\begin{figure}[h]
{\small
\begin{grammatica}
\produzione{\e}{\x\mid\FieldAccess{\e}{\f}\mid\FieldAssign{\e}{\f}{\e'}\mid\ConstrCall{\C}{\es}\mid\MethCall{\e}{\m}{\es}\mid\Block{\decs}{\e}\mid\iota}{expression}\\
\produzione{\dec}{\Dec{\C}{\x}{\e}}{declaration}\\ 
\\
\produzione{\val}{\iota}{value}\\
\produzione{\ctx}{\emptyctx\mid\FieldAccess{\ctx}{\f}\mid\FieldAssign{\ctx}{\f}{\e'}\mid\FieldAssign{\iota}{\f}{\ctx}\mid\ConstrCall{\C}{\vs,\ctx,\es}}{evaluation context}\\
\seguitoproduzione{\mid\MethCall{\ctx}{\m}{\es}\mid\MethCall{\iota}{\m}{\vs,\ctx,\es}\mid\Block{\Dec{\C}{\x}{\ctx}\ \decs}{\e}}{}
\end{grammatica}
{
\bigskip
\hrule
\bigskip
\begin{math}
\begin{array}{l}
{\NamedRuleOL{alpha}{\congruence{\Block{\decs\ \Dec{\C}{\x}{\e}\ \decs'}{\e'}}{\Block{\Subst{\decs}{\y}{\x}\ \Dec{\C}{\y}{\Subst{\e}{\y}{\x}}\ \Subst{\decs'}{\y}{\x}}{\Subst{\e'}{\y}{\x}}}}{}}
\BigSpace\NamedRuleOL{block-elim}{\congruence{\Block{}{\e}}{\e}}{}
\end{array}
\end{math}}
\bigskip
\hrule
\bigskip
\begin{math}
\begin{array}{l}
\NamedRule{ctx}{\creduce{\ExpMem{\e}{\mu}}{\ExpMem{\e'}{\mu'}}}{\creduce{\ExpMem{\Ctx{\e}}{\mu}}{\ExpMem{\Ctx{\e'}}{\mu'}}}{}\BigSpace
\NamedRuleOL{field-access}{\creduce{\ExpMem{\FieldAccess{\iota}{\f_i}}{\mu}}{\ExpMem{\val_i}{\mu}}}
{\mu(\iota)=\ConstrCall{\C}{\val_1,\dots,\val_n}\\
\fields{\C}=\Field{\C_1}{\f_1}\dots\Field{\C_n}{\f_n}{\ \wedge\ 1\leq i\leq n}}\\[4ex]
\NamedRuleOL{field-assign}{\creduce{\ExpMem{\FieldAssign{\iota}{\f_i}{\val}}{\mu}}{\ExpMem{\val}{\UpdateMem{\mu}{\iota}{i}{\val}}}}{\mu(\iota)=\ConstrCall{\C}{\vs}\\
\fields{\C}=\Field{\C_1}{\f_1}\dots\Field{\C_n}{\f_n}{\ \wedge\ 1\leq i\leq n}}\\[4ex]
\NamedRuleOL{new}{\creduce{\ExpMem{\ConstrCall{\C}{\vs}}{\mu}}{\ExpMem{\iota}{\Subst{\mu}{\ConstrCall{\C}{\vs}}{\iota}}}}{\iota\not\in\dom{\mu}}\\[4ex]
\NamedRuleOL{invk}{\creduce{\ExpMem{\MethCall{\iota}{\m}{\val_1,\dots,\val_n}}{\mu}}{\ExpMem{\Block{\Dec{\C_1}{\this}{\iota}\ \Dec{\C_1}{\x_1}{\val_1}\dots\Dec{\C_n}{\x_n}{\val_n}}{\e}}}{\mu}}
{\mu(\iota)=\ConstrCall{\C}{\vs}\\
\method{\C}{\m}=\Pair{\x_1\dots\x_n}{\e}}\\[4ex]
\NamedRuleOL{dec}{\creduce{\ExpMem{\Block{\Dec{\C}{\x}{\iota}\ \decs}{\e}}{\mu}}{\ExpMem{\Subst{\Block{\decs}{\e}}{\iota}{\x}}{\mu}}}{}
\end{array}
\end{math}
}
\caption{Conventional calculus}\label{fig:conventional}
\end{figure}

An expression can be a variable (including the special variable $\this$
denoting the receiver in a method body), a field access, a field assignment, a
constructor invocation, a method invocation, or a block consisting of a sequence of local variable declarations and a
body. In addition,  a (runtime) expression can be an \emph{object identifier} $\iota$. 
Blocks are included to have a more direct correspondence with the syntactic calculus.
In a block, a declaration specifies a type (class name), a
variable and an initialization expression. We assume that in well-formed blocks
there are no multiple declarations for the same variable{, that is, $\decs$ can
  be seen as a map from variables to expressions: $\dom{\decs}$ denotes the set of variables declared in $\decs$
  and $\decs(\x)$ the initialization expression for $\x$ in $\decs$, if any. 
  
In the examples,
we generally omit the brackets of the outermost block, and abbreviate
$\Block{\Dec{\T}{\x}{\e}}{\e'}$ by $\Sequence{\e}{\e'}$ when $\x$ does not
occur free in $\e'$. We also assume integer constants, which are not included in the formalization. 

  {Expressions are identified modulo congruence, denoted by $\congruence{}{}$, defined as the
smallest congruence satisfying the axioms in the mid section of \refToFigure{conventional}.  Rule \rn{alpha} is the usual $\alpha$-conversion. The condition
$\x,\y\not\in\dom{\decs\,\decs'}$ is implicit by well-formedness of blocks. Rule \rn{block-elim} identifies  a block without declarations with its body.}
  
The class table is abstractly modelled by the following functions:
\begin{itemize}
  \item $\fields{\C}$ gives, for each declared class $\C$, the sequence 
    $\Field{\C_1}{\f_1}\dots\Field{\C_n}{\f_n}$ of its fields 
    declarations.
      \item $\method{\C}{\m}$ gives, for each method $\m$ declared in class $\C$, the pair 
    consisting of its parameters and body.
\end{itemize} 

The reduction relation $\creduce{}{}$ is defined on pairs $\ExpMem{\e}{\mu}$ where a \emph{memory} $\mu$ is a finite map from object identifiers $\iota$ into object states of shape $\ConstrCall{\C}{\vs}$.
Values are object identifiers (we do not identify the two sets since, extending the language, values would be extended to include, e.g., primitive values {such as integers}). 

Evaluation contexts and reduction rules are straightforward. {In rule \rn{field-assign}, we} denote by $\UpdateMem{\mu}{\iota}{i}{\val}$ the memory where the $i$-th field of the object state associated to $\iota$ has been replaced by $\val$.
{In rule \rn{invk}, we take advantage of the block construct to provide a modular semantics, where
a method call is
reduced to a block where declarations of the appropriate type for $\this$ and the
parameters are initialized with the receiver and the arguments, respectively, and the body is the method body. Indeed, this rule plus a sequence of applications of rule \rn{Dec} is equivalent to the standard FJ rule}
${\creduce{\ExpMem{\MethCall{\iota}{\m}{\val_1,\dots,\val_n}}{\mu}}{\ExpMem{\SubstDot{\Subst{\e}{\iota}{\this}}{\val_1}{\x_1}{\val_n}{\x_n}}}{\mu}}$.}
Local variable declarations have the standard substitution semantics, and are elaborated in the usual left-to-right order (no recursion is allowed).

 % !TEX root =main.tex
 
 \section{The syntactic calculus}\label{sect:syntactic}

The syntax of the expressions, given in \refToFigure{pure-syntax}, is the same as the conventional calculus, except that runtime expressions (object identifiers) are not needed. 
To lighten the notation, we use the same metavariables.

\begin{figure}[h]
{\small
\begin{grammatica}
\produzione{\e}{\x\mid\FieldAccess{\e}{\f}\mid\FieldAssign{\e}{\f}{\e'}\mid\ConstrCall{\C}{\es}\mid\MethCall{\e}{\m}{\es}\mid\BlockLab{\decs}{\e}{\X}}{expression}\\
\produzione{\dec}{\Dec{\T}{\x}{\e}}{declaration}\\ 
\produzione{\T}{\Type{\qualifier}{\C}}{declaration type}\\
\produzione{\qualifier}{\epsilon\mid\capsule}{optional qualifier}\\
\\
\produzione{\val}{\x\mid\BlockLab{\dvs}{\x}{\X}}{value}\\
\produzione{\dv}{\Dec{\C}{\x}{\ConstrCall{\C}{\xs}}}{evaluated declaration}\\
\produzione{\ctx}{\emptyctx\mid%{\valCtx[\ctx]}\mid
\FieldAccess{\ctx}{\f}\mid\FieldAssign{\ctx}{\f}{\e'}\mid\FieldAssign{\x}{\f}{\ctx}\mid
\ConstrCall{\C}{{\xs},\ctx,\es}\mid\MethCall{\ctx}{\m}{\es}\mid
\MethCall{{\x}}{\m}{\vs,\ctx,\es}
\mid\blockCtx{}}{evaluation context}\\
\produzione{\blockCtx}{\BlockLab{\dvs\ \Dec{\C}{\y}{\ctx}\ \decs}{\e}{\X}\mid \BlockLab{\dvs}{\ctx}{\X}
}{block context}\\
\produzione{\valCtx}{\FieldAccess{\emptyctx}{\f}\mid\FieldAssign{\emptyctx}{\f}{\e'}\mid\FieldAssign{\x}{\f}{\emptyctx}\mid\ConstrCall{\C}{\xs,\emptyctx,\es}\mid{\MethCall{\emptyctx}{\m}{\es}}}{value context}
\end{grammatica}}
\caption{Syntactic calculus: syntax, values, and evaluation contexts}
\label{fig:pure-syntax}
\end{figure}

Moreover, some annotations are inserted in terms. Namely:
\begin{itemize}
\item Local variable declarations (and method parameters) are possibly annotated with a qualifier $\capsule$,
{which, if present, indicates that} the variable is \emph{affine}.   An affine variable can occur at most once in its scope, and should be
initialized with a \emph{capsule}, that is, an isolated portion of store. In
this way, it can be used as a temporary reference, to ``move'' a capsule to
another location in the store, without introducing sharing. 
\item Blocks are annotated with a set $\X$ of variables, assumed to be a subset of the declared variables. During reduction, if a block $\BlockLab{\decs}{\e}{\X}$ should reduce to a capsule, only declarations of variables which are not in $\X$ can be safely moved outside of the block, see rule \rn{move-dec}. In this paper, since our focus is on the operational model, we do not care about how block annotations are generated. Of course, a trivial overapproximation consists in taking as $\X$ the set of all declared variables; a better approximation is taking only those which are  used (that is, have some free occurrence in initialization expressions/body), or, {even better}, {only those which} are transitively used by the body, in the sense formally defined below. We have shown in previous work \cite{GianniniSZ17,GianniniSZ17a,GianniniSZ18} that through a type and effect system  it is possible to {obtain a much more precise approximation, computing $\X$ to be the set of the} variables which \emph{will be possibly connected with the final result of the block}. 
\end{itemize}

A sequence $\dvs$ of \emph{evaluated declarations} plays the role of the memory
in the conventional calculus, that is, each $\dv$ can be seen
as an association of an \emph{object state} $\ConstrCall{\C}{\xs}$ to a reference. 

A value is a reference to an object, possibly enclosed in a block where all declarations are evaluated (hence, correspond to a local memory).\\
We assume that, in a block value $\BlockLab{\dvs}{\x}{\X}$, $\dvs\neq\epsilon$ and $\Reduct{\dvs}{\x}=\dvs$, where, given a sequence of declarations $\decs\equiv\Dec{\T_1}{\x_1}{\e_1}\dots\Dec{\T_n}{\x_n}{\e_n}$ and an expression $\e$,  
$\Reduct{\decs}{\e}$ are the declarations of variables
(transitively) used by $\e$, that is, free either in $\e$ or in some $\e_i$ such that $\x_i$ is transitively used by $\e$.

We write $\FV{\decs}$ and $\FV{\e}$ for the free variables of a sequence of
declarations and of an expression, respectively, and $\Subst{\X}{\y}{\x}$,
$\Subst{\decs}{\y}{\x}$, and $\Subst{\e}{\y}{\x}$ for the capture-avoiding
variable substitution on a set of variables, a sequence of declarations, and an
expression, respectively, all defined in the standard way. 

In the syntactic calculus, capsules can be characterized in a very simple way: indeed, a value is a capsule, written $\IsCapsule{\val}$, if it is  a closed block value, that is, of shape $\Block{\dvs}{\x}$ with no free variables. The above requirement that all local variables must be transitively used by $\x$ is needed, indeed, since otherwise a block value containing unused free variables, e.g., $\Block{\Dec{\C}{\x}{\ConstrCall{\C}{}}\ \Dec{\D}{\y}{\ConstrCall{\D}{\z}} }{\x}$ would be not recognized to be a capsule. Unused evaluated declarations are removed by rule \rn{garbage}. 

Evaluation contexts $\ctx$ are mostly standard. Note that values are assumed to be references, apart from arguments of method calls, which are allowed to be block values. This models the fact that block values (hence, capsules) are first-class values which can be passed to methods. However, they need to be ``opened'' when we perform an actual operation on them. We distinguish, among evaluation context, \emph{block contexts}, $\blockCtx$, having the shape of a block, which play a special role in the reduction rules.

The {\em hole binders} of a context $\ctx$, dubbed $\HB{\ctx}$, are the variables bound by the context, defined by:
\begin{itemize}
\item $\HB{\emptyctx}=\emptyset$, $\HB{\FieldAccess{\ctx}{\f}}=\cdots=\HB{\MethCall{{\x}}{\m}{\vs,\ctx,\es}}=\HB{\ctx}$
\item $\HB{\BlockLab{\dvs\ \Dec{\C}{\y}{\ctx}\ \decs}{\e}{\X}}=\dom{\dvs\, \decs}\cup\{\y\}\cup\HB{\ctx}$ and $\HB{\BlockLab{\dvs}{\ctx}{\X}}=\dom{\dvs}\cup\HB{\ctx}$.
\end{itemize}
Moreover, given $\blockCtx=\BlockLab{\dvs\ \Dec{\C}{\y}{\ctx}\ \decs}{\e}{\X}$ or $\blockCtx=\BlockLab{\dvs}{\ctx}{\X}$, we define 
\begin{itemize}
\item $\extractDec{\blockCtx}{\x}=\dvs(\x)$, i.e., the object state associated to $\x$ in $\dvs$, if $\x\in\dom{\dvs}$, and  
\item $\Inner{\blockCtx}=\HB{\ctx}$, i.e.,  the set of variables declared in the direct subcontext $\ctx$ of $\blockCtx$.
\end{itemize}

A \emph{value context} $\valCtx$ is an evaluation context with the shape of either a field access, or a field assignment, or a constructor invocation, or a method invocation, where the hole (expected to be filled with a block value) is a direct subterm (the receiver in the last case).

Expressions are identified modulo congruence, denoted by $\cong$, defined as the
smallest congruence satisfying the axioms in \refToFigure{pure-cong}.  {Rules \rn{alpha} and \rn{block-elim} are as in the conventional calculus.}
Rule \rn{reorder} states that we can move evaluated declarations in an
arbitrary order.  Note that, instead, $\decs$ and $\decs'$ cannot be swapped,
because this could change the order of side effects. 

\begin{figure}[h]
{\small 
\begin{math}
\begin{array}{l}
{\NamedRuleOL{alpha}{{\congruence{\BlockLab{\decs\ \Dec{\C}{\x}{\e}\ \decs'}{\e'}{\X}}{\BlockLab{\Subst{\decs}{\y}{\x}\ \Dec{\C}{\y}{\Subst{\e}{\y}{\x}}\ \Subst{\decs'}{\y}{\x}}{\Subst{\e'}{\y}{\x}}{{\Subst{\X}{\y}{\x}}}}}}{}}
\BigSpace{\NamedRuleOL{block-elim}{\congruence{\BlockLab{}{\e}{\emptyset}}{\e}}{}}\\[3ex]
\NamedRuleOL{reorder}{\congruence{
\BlockLab{
\decs\ \dv\ \decs'}{\e}{\X}}{
\BlockLab{
\dv\ \decs\ \decs' }{\e}{\X}}}{}
\end{array}
\end{math}}
\caption{Syntactic calculus: congruence rules}\label{fig:pure-cong}
\end{figure}

Reduction rules are given in \refToFigure{pure-red}. {We explicitly distinguish the relation $\reduceOS{}{}$ defined by \emph{computational rules} only, and reduction $\reduce{}{}$, defined as its contextual closure. In other words, in rule \rn{ctx} we assume that $\ctx$ is a maximal context. This simplifies proofs in \refToSection{preservation}. More importantly, in this way we can prevent reduction of a constructor call in a declaration context, see comments to rule \rn{new} below.}

\begin{figure}[h]
{\small 
\begin{math}
\begin{array}{l}
\NamedRule{ctx}{\reduceOS{\e}{\e'}}{\reduce{\Ctx{\e}}{\Ctx{\e'}}}{}\BigSpace
{\NamedRuleOL{new}
{  \reduce{ \Ctx{\ConstrCall{\C}{\xs}} }
   {  \Ctx{\BlockLab{\Dec{\C}{\x}{\ConstrCall{\C}{\xs}}}{\x}{{\{\x\}}}} }    }
{\ctx\ {\neq\ \CtxP{\BlockLab{\dvs\ \Dec{\C}{\y}{\emptyctx}\ \decs}{\e}{\X}}}}}
\\[4ex]
\NamedRuleOL{field-access}{\reduceOS{\BlockCtx{\FieldAccess{\x}{\f_i}}}{\BlockCtx{\x_i}}}{
{\extractDec{\blockCtx}{\x}=\ConstrCall{\C}{\x_1,\dots,\x_n} \wedge\ {\x\not\in\Inner{\blockCtx}}}\\
\fields{\C}=\Field{\C_1}{\f_1}\dots\Field{\C_n}{\f_n}{\ \wedge\ 1\leq i\leq n}\\
\x_i\not\in\Inner{\blockCtx}}
\\[4ex]
{\NamedRuleOL{field-assign}{
\reduceOS{\BlockCtx{\FieldAssign{\x}{\f_i}{\y}}}{{\UpdateCtx{\y}{\x}{i}{\y}}}
}{
{\extractDec{\blockCtx}{\x}=\ConstrCall{\C}{{\xs}}\wedge\ {\x\not\in\Inner{\blockCtx}}}\\
\fields{\C}=\Field{\C_1}{\f_1}\dots\Field{\C_n}{\f_n}{\ \wedge\ 1\leq i\leq n}\\
\y\not\in\Inner{\blockCtx}}}
\\[4ex]
\NamedRuleOL{invk}{\reduceOS{\BlockCtx{\MethCall{{\x}}{\m}{\val_1,\ldots,\val_n}}}{\BlockCtx{\Block{\Dec{\C}{\this}{{\x}}\, \Dec{\Type{\qualifier_1}{\C_1}}{\x_1}{\val_1}, \ldots,\Dec{\Type{\qualifier_n}{\C_n}}{\x_n}{\val_n}}{\e}}}}{
\!\!\!\!\!\extractDec{\blockCtx}{\x}{=}\ConstrCall{\C}{\xs}\\
\!\!\!\!\! \x\not\in\Inner{\blockCtx}\\
\!\!\!\!\!\method{\C}{\m}{=}{\Pair{\Param{\Type{\qualifier_1}{\C_1}}{\x_1},\ldots,\Param{\Type{\qualifier_n}{\C_n}}{\x_n}}{\e}}}
\\[4ex]
{\NamedRuleOL{alias-elim}{\reduceOS{\BlockLab{\dvs\ \Dec{\C}{\x}{\y}\ \decs }{\e}{\X}}{\BlockLab{\dvs\ \Subst{\decs}{\y}{\x}}{\Subst{\e}{\y}{\x}}{\X\setminus\{\x\}}}}{}}
\\[4ex]
{\NamedRuleOL{affine-elim}{\reduceOS{\BlockLab{\dvs\ \Dec{\Type{\capsule}{\C}}{\x}{\val}\ \decs }{\e}{X}}{{\BlockLab{\dvs\ \Subst{\decs}{\val}{\x}}{\Subst{\e}{\val}{\x}}{\X\setminus\{\x\}}}}}{\IsCapsule{\val}}}
\\[4ex]
\NamedRuleOL{garbage}
{
\reduceOS{
\BlockLab{\dvs\ \decs}{\e}{\X}}{\BlockLab{\decs}{\e}{{\X}\setminus{\dom{\dvs}}}}}
{
\begin{array}{l}
(\FV{\decs}\cup\FV{\e})\cap\dom{\dvs}=\emptyset
\end{array}}
\\[4ex]
\NamedRuleOL{move-dec}
{\reduceOS{\BlockLab{{\dvs}\ \Dec{{\Type{\qualifier}{\C}}}{\x}{\BlockLab{{\dvs'}\ {\decs}}{\e}{\X}}\ \decs'}{\e'}{{\Y}}}
{\BlockLab{{\dvs}\ {\dvs'}\ \Dec{{\Type{\qualifier}{\C}}}{\x}{\BlockLab{{\decs}}{\e}{\X}}\ \decs'}{\e'}{{\Y}}}}
{\FV{{\dvs'}}\cap\dom{{\decs}}=\emptyset\\
\FV{{\dvs}\ \decs'\ {\e'}}{\cap}\dom{{\dvs'}}{=}\emptyset\\ 
{\qualifier=\capsule\Rightarrow\dom{\dvs'}{\cap}\X{=}\emptyset}}
\\[4ex]
\NamedRuleOL{move-body}
{\reduceOS{\BlockLab{{\dvs}}{\BlockLab{{\dvs'}\ {\decs}}{\e}{\X}}{\Y}}{\BlockLab{\dvs\ \dvs'}{\BlockLab{\decs_2}{\e}{\X}}{{\Y}}}}
{\FV{{\dvs'}}\cap\dom{{\decs}}=\emptyset\\
\FV{{\dvs}}\cap\dom{{\dvs'}}=\emptyset}
\\[4ex]
\NamedRuleOL{move-subterm}{\reduceOS
{\ValCtx{\BlockLab{{\dvs}\ {\dvs'}}{{\x}}{\X}}}
{\BlockLab{{\dvs}}{\ValCtx{\BlockLab{ {\dvs'}}{{\x}}{\X\setminus{\dom{{\dvs}}}}}}{\X\cap\dom{{\dvs}}}}}
{\FV{{\dvs}}\cap\dom{ {\dvs'}}=\emptyset\\
\FV{\valCtx}\cap\dom{{\dvs}}=\emptyset}
\end{array}
\end{math}}
\caption{Syntactic calculus: reduction rules}\label{fig:pure-red}
\end{figure}
Rule \rn{ctx} is the usual contextual closure.

In rule \rn{new}, a constructor invocation where all arguments are references 
is reduced to an elementary block where a new object is allocated. However, this reduction is not allowed when the constructor invocation occurs as right-hand side of a declaration, to prevent non-terminating reduction sequences such as the following:\\
\centerline{$\BlockLab{\Dec{\C}{\x}{\ConstrCall{\C}{\x}}}{\x}{}\longrightarrow\BlockLab{\Dec{\C}{\x}{\BlockLab{\Dec{\C}{\x}{\ConstrCall{\C}{\x}}}{\x}{}}}{\x}{}\longrightarrow\cdots$}
 {Note that, if the constructor invocation occurs, instead, on the right-side of a declaration of an affine variable, the
 rule is applicable and does not produce a non-terminating reduction sequence. For instance:\\
\centerline{
 $\BlockLab{\Dec{\Type{\capsule}{\C}}{\x}{\ConstrCall{\C}{\x}}}{\x}{}\longrightarrow\BlockLab{\Dec{\Type{\capsule}{\C}}{\x}{\BlockLab{\Dec{\C}{\x}{\ConstrCall{\C}{\x}}}{\x}{}}}{\x}{}$}
 and rule \rn{new} is no longer applicable. }

In rule \rn{field-access}, a field access of shape
$\FieldAccess{\x}{\f}$ is evaluated in the block context containing the first enclosing (evaluated) declaration for $\x$, as expressed by the first
side condition and the definition of $\extractDec{\blockCtx}{\x}$. 
 The fields of the class $\C$ of
$\x$ are retrieved from the class table.  If $\f$ is the name of a
field of $\C$, say, the $i$-th, then the field access is reduced to the
reference $\x_i$ stored in this field.  
The condition $\x_i\not\in\Inner{\blockCtx}$
ensures that there are no inner declarations for $\x_i$ (otherwise
$\x_i$ would be erroneously bound).  This can always be obtained by rule
\rn{alpha} of \refToFigure{pure-cong}.
For instance, assuming a class table where class \lstinline{A} has an
\lstinline{int} field, and class \lstinline{B} has an \lstinline{A} field \lstinline{f},
without this side condition, the term:
\begin{lstlisting}
A a=new A(0); B b=new B(a); {A a=new A(1); b.f}
\end{lstlisting}
{would reduce to}
\begin{lstlisting}
A a=new A(0); B b=new B(a); {A a=new A(1); a}
\end{lstlisting}
whereas this reduction is forbidden by the side condition. However, by rule \rn{alpha} of \refToFigure{pure-cong}, the term is congruent to one that can
be reduced to
\begin{lstlisting}
A a=new A(0); B b=new B(a); {A a1=new A(1); a}
\end{lstlisting}

In rule \rn{field-assign}, a field assignment of shape $\FieldAssign{\x}{\f}{\y}$ is evaluated in the block context containing the first enclosing (evaluated) declaration for $\x$, as expressed by the first side condition. 
The fields of the class $\C$ of
$\x$ are retrieved from the class table. 
If $\f$ is the name of
a field of $\C$, say, the $i$-th, then this first enclosing declaration is
updated, by replacing the $i$-th constructor argument by $\y$ obtaining the declaration
$\Dec{\C}{\x}{\ConstrCall{\C}{\x_1,\x_{i-1},\y,\x_{i+1},\dots,\x_n}}$ as
expressed by the notation $\updateCtx{\y}{\x}{i}$ (whose obvious formal
definition is omitted).  Analogously to rule \rn{field-access}, we have the side
condition that  $\y\not\in\Inner{\blockCtx}$. This side condition, requiring that there
are no inner declarations for $\y$, prevents scope extrusion,
since if $\y\in\Inner{\blockCtx}$, $\updateCtx{\y}{\x}{i}$ would take $\y$
outside the scope of its definition. 
For example, without this side condition, the term
\begin{lstlisting}
A a=new A(0); B b=new B(a); {A a1=new A(1); b.f=a1}
\end{lstlisting}
would reduce to 
\begin{lstlisting}
A a=new A(0); B b=new B(a1); {A a1=new A(1); a1}
\end{lstlisting}
which is not correct since \verb!a1! is a free variable. The rules \rn{move-dec} and
\rn{move-body}  (see below) can be used to move the
declaration of $\y$ outside its declaration block. So the term reduces, instead, to
\begin{lstlisting}
A a=new A(0); B b=new B(a); A a1=new A(1); b.f=a1
\end{lstlisting}
by applying first rule \rn{move-body}, and then {congruence rule} \rn{block-elim}. Now
the term correctly reduces to
\begin{lstlisting}
A a=new A(0); B b=new B(a1); A a1=new A(1); a1
\end{lstlisting}

In rule \rn{invk}, a method call of shape $\MethCall{\x}{\m}{\val_1,..,\val_n}$ is evaluated in the block context containing the first enclosing (evaluated) declaration for $\x$, as expressed by the first side condition. 
Method $\m$ of $\C$, if any, is retrieved from the class table.
The call is
reduced to a block where declarations of the appropriate type for $\this$ and the
parameters are initialized with the receiver and the arguments, respectively, and the body is the method body. 
If a parameter is affine, then the corresponding argument will be checked to be a capsule when the
formal parameter will be substituted with the associated value by rule \rn{affine-elim}, see below.  We assume that parameters which are affine occur at most once in the body of the method.

The following two rules eliminate declarations from a block.\\
In rule \rn{alias-elim}, a reference (non affine variable) $\x$ which is
initialized as an alias of another reference $\y$ is eliminated by replacing
all its occurrences. In rule \rn{affine-elim}, an affine variable is eliminated
by replacing its unique occurrence with the value associated to its
declaration. The rule can only be applied if such value is a capsule. Such side condition formally models
that execution includes a runtime check in this case, as it happens, e.g., in Java downcasts. Note also that, even ignoring the $\capsule$ qualifier in the latter, the two rules do not overlap, since a reference $\y$ is trivially not a capsule.

Rule \rn{garbage} states that we can remove an unused
sequence of evaluated declarations from a block. Note that it is only possible
to safely remove declarations which are evaluated, since they do not have
side effects.

With the remaining rules we can move a sequence of 
evaluated declarations from a block to
the directly enclosing block, as it happens with rules for
\emph{scope extension} in the $\pi$-calculus \cite{Milner99}.
 
In rules \rn{move-dec} and \rn{move-body}, the inner block is the right-hand
side of a declaration, or the body, respectively, of the enclosing block. The first two
side conditions ensure that moving the declarations {$\dvs'$} {does  cause
neither scope extrusion nor capture of free variables}. More precisely: the first
prevents moving outside a declaration {$\dvs'$} which depends on local
variables of the inner block. The second prevents capturing with {$\dvs'$} free
variables of the enclosing block.  Note that the second condition can be
obtained by $\alpha$-conversion of the inner block, but the first cannot.
Finally, when the block initializes an affine variable,
the third side condition of rule \rn{dec} forbids to move outside the
block declarations of variables that are in the annotation $\X$ of the block. Indeed, annotations can be rough or very precise, as discussed before, but in any case such that variables not in $\X$ cannot be possibly connected to the final result of the block, hence can be safely moved outside without ``breaking'' the capsule property (that the block should ultimately reduce to a closed expression).
In case of a non affine
declaration, instead, this is not a problem. 

Rule \rn{move-subterm} handles the cases when the inner block is a  subterm of a
field access, field assignment, constructor invocation, or method invocation.
Note that in this case the inner block is necessarily a (block) value. We use value contexts to
express all such cases in a compact way.

% !TEX root =main.tex
\section{Preservation of semantics}\label{sect:preservation}
In this section, for clarity, we use $\etilde$ and $\decstilde$ to range over expressions and sequences of declarations of the conventional calculus, which could include object identifiers.

We show that, if an expression has a reduction sequence in the syntactic calculus, then it has an ``equivalent'' reduction sequence in the conventional calculus. 
Note that the converse does not hold, since an expression which is stuck in the syntactic calculus, since a capsule runtime check (side condition of rule \rn{affine-elim}) fails, could reduce in the conventional calculus.
That is, in the syntactic calculus, by declaring  $\x$ affine, the programmer can rely on the fact that during computation $\x$ will always denote an isolated portion of memory\footnote{{Hence, nobody else in the program can affect memory reachable from $\x$, hence such memory can be safely handled by a thread.} }, otherwise an exception would be raised.  

In order to state the preservation results (\refToTheorem{step} and Corollary \ref{preservation}), we define a \emph{matching} relation $\correspondence{}{\rho}$ between expressions $\e$ of the syntactic calculus and pairs $\ExpMem{\etilde}{\mu}$ of the conventional calculus.  {The relation is labelled by an injective mapping $\rho$ from object identifiers in {the domain of} $\mu$ to variables.  Intuitively, $\correspondence{\e}{\rho}{\ExpMem{\etilde}{\mu}}$ holds if
 variables in the image of $\rho$ are all those declared in evaluated declarations in $\e$ and all such declarations are removed in $\etilde$. More precisely, since the same variable could be declared in different blocks inside an expression, the mapping $\rho$ should be from object identifiers to \emph{binding occurrences} of variables. To make the formal treatment simpler, we can assume that each variable is declared at most once in expressions of the syntactic calculus. {This can} be obtained by $\alpha$-renaming.

Formally the relation is inductively defined by the rules in \refToFigure{correspondence}. 

A variable in the syntactic calculus is matched in the conventional calculus by {a pair where the expression is} the corresponding  object identifier in $\rho$, if any, rule \rn{oid}, otherwise the variable itself, rule \rn{var}. The former case happens when the variable declaration is evaluated in the syntactic calculus. 

Rules \rn{field-access}, \rn{field-assign}, \rn{new}, and \rn{invk} just propagate matching to subterms. For $\es=\e_1, \ldots, \e_n$, {$\estilde=\etilde_1, \ldots, \etilde_n$,} we use $\correspondence{\es}{\rho}{\ExpMem{\estilde}{\mu}}$ to abbreviate $\correspondence{\e_i}{\rho}{\ExpMem{\etilde_i}{\mu}}\Space \forall i\in 1..n$.

In rule \rn{block}, a block $\BlockLab{\dvs\ \decs}{\e}{\x}$  in the syntactic calculus is matched by a block and memory $\ExpMem{\Block{\decstilde}{\etilde}}{\mu}$ of the conventional calculus if evaluated declarations $\dvs$ are matched by memory $\mu$, as expressed by the auxiliary judgment $\correspondence{\dvs}{\rho}{\mu}$, and  matching is propagated to other subterms. 
For $\decs=\dec_1\ldots\dec_n$, $\decstilde=\dectilde_1\ldots\dectilde_n$, we use $\correspondence{\decs}{\rho}{\ExpMem{\decstilde}{\mu}}$ to abbreviate $\correspondence{\dec_i}{\rho}{\ExpMem{\dectilde_i}{\mu}}\Space \forall i\in 1..n$, and analogously for $\correspondence{\dvs}{\rho}{\mu}$.

\begin{figure}[h]
{\small 
\begin{math}
\begin{array}{l}
\NamedRule{oid}{}{\correspondence{\x}{\rho}{\ExpMem{\iota}{\mu}}}{\rho(\iota) = \x}{\BigSpace\NamedRule{var}{}{\correspondence{\x}{\rho}{\ExpMem{\x}{\mu}}}{\x\not\in\img{\rho}}}\\[4ex]
\NamedRule{field-access}{\correspondence{\e}{\rho}{\ExpMem{\etilde}{\mu}}}{\correspondence{\FieldAccess{\e}{\f}}{\rho}{\ExpMem{\FieldAccess{\etilde}{\f}}{\mu}}}{}\BigSpace
\NamedRule{field-assign}{\correspondence{\e}{\rho}{\ExpMem{\etilde}{\mu}}\BigSpace\correspondence{\e'}{\rho}{\ExpMem{\etilde'}{\mu}}}{\correspondence{\FieldAssign{\e}{\f}{\e'}}{\rho}{\ExpMem{\FieldAssign{\etilde}{\f}{\etilde'}}{\mu}}}{}\\[4ex]
\NamedRule{new}{ %\correspondence{\e_i}{\rho}{\ExpMem{\etilde_i}{\mu}}\Space \forall i\in 1..n
\correspondence{\es}{\rho}{\ExpMem{\estilde}{\mu}}
}{\correspondence{\ConstrCall{\C}{\es}}{\rho}{\ExpMem{\ConstrCall{\C}{\estilde}}{\mu}}}{}\BigSpace
    \NamedRule{invk}{ \correspondence{\e}{\rho}{\ExpMem{\etilde}{\mu}\BigSpace\correspondence{\es}{\rho}{\ExpMem{\estilde}{\mu}}}
 }{\correspondence{\MethCall{\e}{\m}{\es}}{\rho}{\ExpMem{\MethCall{\etilde}{\m}{\estilde}}{\mu}}}{} \\[4ex]
    {\NamedRule{block}{\correspondence{\dvs}{\rho}{\mu}\BigSpace\correspondence{\decs}{\rho}{\ExpMem{\decstilde}{\mu}}\BigSpace\correspondence{\e}{\rho}{\ExpMem{\etilde}{\mu}}}{\correspondence{\BlockLab{\dvs\ \decs}{\e}{\X}}{\rho}{\ExpMem{\Block{\decstilde}{\etilde}}}{\mu}}{}
    \BigSpace\NamedRule{dec}{\correspondence{\e}{\rho}{\ExpMem{\etilde}{\mu}}}{\correspondence{\Dec{\Type{\qualifier}{\C}}{\x}{\e}}{\rho}{\ExpMem{\Dec{\C}{\x}{\etilde}}{\mu}}}{\Dec{\Type{\qualifier}{\C}}{\x}{\e}\neq\dv}}\\[5ex]
\NamedRule{ev-dec}{}{\correspondence{\Dec{\C}{\x}{\ConstrCall{\C}{\x_1, \ldots, \x_n}}}{\rho}{\mu}}{\mu(\rho^{-1}(\x))=\ConstrCall{\C}{\rho^{-1}(\x_1), \ldots, \rho^{-1}(\x_n)}}
\end{array}
\end{math}}
\caption{Matching relation between terms}\label{fig:correspondence}
\end{figure}

The matching relation can be extended, {in the obvious way,} to evaluation contexts of the syntactic calculus and pairs evaluation context and
memory of the conventional calculus, i.e., $\correspondence{\ctx}{\rho}{\ExpMem{\ctxHat}{\mu}}$. The formal definition is given in \refToFigure{ctx-correspondence}. 

\begin{figure}[h]
{\small 
\begin{math}
\begin{array}{l}
\NamedRule{C-empty}{}{\correspondence{\emptyctx}{\rho}{\ExpMem{\emptyctx}{\mu}}}{}\BigSpace
\NamedRule{C-field-access}{\correspondence{{\ctx}}{\rho}{\ExpMem{\ctxHat}{\mu}}}{\correspondence{\FieldAccess{\ctx}{\f}}{\rho}{\ExpMem{\FieldAccess{\ctxHat}{\f}}{\mu}}}{}\\[4ex]
\NamedRule{C-field-assign-l}{\correspondence{{\ctx}}{\rho}{\ExpMem{\ctxHat}{\mu}}\BigSpace{\correspondence{\e}{\rho}{\ExpMem{\etilde}{\mu} }}}{\correspondence{\FieldAssign{\ctx}{\f}{\e}}{\rho}{\ExpMem{\FieldAssign{\ctxHat}{\f}{\etilde}}{\mu}}}{}
\BigSpace
\NamedRule{C-field-assign-r}{\correspondence{{\ctx}}{\rho}{\ExpMem{\ctxHat}{\mu}}\BigSpace{\correspondence{\x}{\rho}{\ExpMem{\iota}{\mu} }}}{\correspondence{\FieldAssign{\x}{\f}{\ctx}}{\rho}{\ExpMem{\FieldAssign{\iota}{\f}{\ctxHat}}{\mu}}}{}\\[4ex]
\NamedRule{C-new}{\correspondence{{\ctx}}{\rho}{\ExpMem{\ctxHat}{\mu}}\BigSpace\correspondence{\es}{\rho}{\ExpMem{\estilde}{\mu}}\BigSpace\correspondence{\xs}{\rho}{\ExpMem{\iotas}{\mu}}}
{\correspondence{\ConstrCall{\C}{\xs, \ctx,\es}}{\rho}{\ExpMem{\ConstrCall{\C}{\iotas, \ctxHat, \estilde}}{\mu}}}{}\\[4ex]
    \NamedRule{C-invk-rcv}{\correspondence{{\ctx}}{\rho}{\ExpMem{\ctxHat}{\mu}}\BigSpace\correspondence{\es}{\rho}{\ExpMem{\estilde}{\mu}}}{\correspondence{\MethCall{\ctx}{\m}{\es}}{\rho}{\ExpMem{\MethCall{\ctxHat}{\m}{\estilde}}{\mu}}}{} %\\[4ex]
\BigSpace
 \NamedRule{C-invk-arg}
  {\correspondence{{\ctx}}{\rho}{\ExpMem{\ctxHat}{\mu}}\BigSpace\correspondence{\es}{\rho}{\ExpMem{\estilde}{\mu}}\BigSpace\correspondence{\vs}{\rho}{\ExpMem{\iotas}{\mu}}}
  {\correspondence{\MethCall{\x}{\m}{\vs, \ctx, \es}}{\rho}{\ExpMem{\MethCall{\iota}{\m}{\iotas, \ctxHat, \estilde}}{\mu}}}{} \\[4ex]     
    \NamedRule{C-blk-dec}{\correspondence{\dvs}{\rho}{\mu}\BigSpace\correspondence{{\ctx}}{\rho}{\ExpMem{\ctxHat}{\mu}}\BigSpace\correspondence{\decs}{\rho}{\ExpMem{\decstilde}{\mu}}\BigSpace\correspondence{\e}{\rho}{\ExpMem{\etilde}{\mu}}}{\correspondence{\BlockLab{\dvs\ \Dec{\C}{\y}{\ctx}\ \decs}{\e}{\X}}{\rho}{\ExpMem{\Block{\Dec{\C}{\y}{\ctxHat}\ \decstilde}{\etilde}}{\mu}}}{}%\\[5ex]
  \BigSpace
\NamedRule{C-blk-body}{\correspondence{\dvs}{\rho}{\mu}\BigSpace\correspondence{{\ctx}}{\rho}{\ExpMem{\ctxHat}{\mu}}}{\correspondence{\BlockLab{\dvs}{\ctx}{\X}}{\rho}{\ExpMem{\ctxHat}{\mu}}}{}
\end{array}
\end{math}}
\caption{Matching relation between evaluation contexts}\label{fig:ctx-correspondence}
\end{figure}

It is easy to show the following lemma.
\begin{lemma}\label{lemma:evCtx}
$\correspondence{\Ctx{\e}}{\rho}{\ExpMem{\etilde'}{\mu}}$ if and only if
$\etilde'=\CtxHat{\etilde}$ such that $\correspondence{\ctx}{\rho}{\ExpMem{\ctxHat}{\mu}}$ and $\correspondence{{\e}}{\rho}{\ExpMem{\etilde}{\mu}}$.
\end{lemma}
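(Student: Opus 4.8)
The plan is to prove both implications by structural induction on the syntactic evaluation context $\ctx$. The key bookkeeping remark is that every context-matching rule of \refToFigure{ctx-correspondence} is obtained from a term-matching rule of \refToFigure{correspondence} by turning exactly one immediate subterm into the hole (while keeping the matching premises for the remaining, hole-free, subterms), the only extra one being the base rule \rn{C-empty}. Dually, every non-empty $\ctx$ plugs its hole into exactly one term constructor, so the matching of $\Ctx{\e}$ is necessarily derived by the corresponding term-matching rule of \refToFigure{correspondence} (for a block, rule \rn{block}, with the split of the declaration sequence into evaluated prefix and remainder dictated by the shape of the block context). Throughout, I rely on the standing assumption, obtainable by $\alpha$-renaming, that each variable is declared at most once, which makes the auxiliary judgement $\correspondence{\dvs}{\rho}{\mu}$ relating evaluated declarations to memory unambiguous even when the block carrying $\dvs$ also encloses the hole.

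For the direction from right to left I would argue by induction on the derivation of $\correspondence{\ctx}{\rho}{\ExpMem{\ctxHat}{\mu}}$. In the base case \rn{C-empty}, $\ctx = \ctxHat = \emptyctx$, hence $\Ctx{\e} = \e$ and $\CtxHat{\etilde} = \etilde$, so the goal reduces to the hypothesis $\correspondence{\e}{\rho}{\ExpMem{\etilde}{\mu}}$. In each inductive case the last rule provides a submatching $\correspondence{\ctx_0}{\rho}{\ExpMem{\ctxHat_0}{\mu}}$ for the immediate subcontext $\ctx_0$, together with matchings of the hole-free side terms; the induction hypothesis gives $\correspondence{\ctx_0[\e]}{\rho}{\ExpMem{\ctxHat_0[\etilde]}{\mu}}$, and the corresponding rule of \refToFigure{correspondence} reassembles everything into $\correspondence{\Ctx{\e}}{\rho}{\ExpMem{\CtxHat{\etilde}}{\mu}}$. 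The only case deserving a comment is \rn{C-blk-body}, where $\ctx = \BlockLab{\dvs}{\ctx_0}{\X}$: there $\dvs$ has been absorbed into $\mu$ (premise $\correspondence{\dvs}{\rho}{\mu}$) and $\ctxHat$ carries no block wrapper, so filling gives $\Ctx{\e} = \BlockLab{\dvs}{\ctx_0[\e]}{\X}$, which rule \rn{block} (with empty remainder) matches against $\ExpMem{\Block{}{\ctxHat_0[\etilde]}}{\mu}$; this pair is identified with $\ExpMem{\ctxHat_0[\etilde]}{\mu} = \ExpMem{\CtxHat{\etilde}}{\mu}$ by the congruence rule \rn{block-elim}, since the development works modulo congruence throughout.

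For the direction from left to right I would argue by induction on $\ctx$. If $\ctx = \emptyctx$, take $\ctxHat = \emptyctx$ and $\etilde = \etilde'$. Otherwise $\Ctx{\e}$ shares its top-level constructor with $\ctx$, so the derivation of $\correspondence{\Ctx{\e}}{\rho}{\ExpMem{\etilde'}{\mu}}$ must end with the term-matching rule of \refToFigure{correspondence} associated with that constructor (rule \rn{block} in the block case, the evaluated-prefix split again being read off from the shape of the block context); inverting it exposes a submatching for the subterm that hosts the hole, i.e.\ for $\ctx_0[\e]$. The induction hypothesis yields $\ctx_0[\e]$-matching as $\ctxHat_0[\etilde]$ with $\correspondence{\ctx_0}{\rho}{\ExpMem{\ctxHat_0}{\mu}}$ and $\correspondence{\e}{\rho}{\ExpMem{\etilde}{\mu}}$, and the corresponding rule of \refToFigure{ctx-correspondence} produces $\correspondence{\ctx}{\rho}{\ExpMem{\ctxHat}{\mu}}$ with $\ctxHat$ the reassembled context and $\etilde' = \CtxHat{\etilde}$ (once more up to \rn{block-elim} in the body-hole subcase). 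I expect the proof to be essentially routine peeling and reassembling of term constructors; the one point that needs care --- the main obstacle, such as it is --- is the body-hole block context \rn{C-blk-body}, where the conventional side acquires an empty block $\Block{}{\cdot}$ that must be identified with its body via \rn{block-elim}, so that the equality $\etilde' = \CtxHat{\etilde}$ is to be read up to $\cong$ as the paper does throughout; a secondary bookkeeping point is that the single-declaration normal form is needed to keep $\correspondence{\dvs}{\rho}{\mu}$ unambiguous.
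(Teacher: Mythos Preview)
The paper does not actually give a proof of this lemma; it states only that ``it is easy to show'' and immediately moves on. Your structural induction on $\ctx$, peeling off one constructor at a time and pairing each rule of \refToFigure{ctx-correspondence} with the corresponding rule of \refToFigure{correspondence}, is precisely the routine argument the paper is alluding to, and your flagging of the \rn{block-elim} congruence needed in the \rn{C-blk-body} case is the one genuinely non-mechanical point.
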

{Note that, when $\Ctx{\e}$ is closed, all the free variables in $\e$ are declared in $\ctx$ and their declaration is evaluated.
Therefore, the previous lemma implies that} they are in the image of $\rho$, and so in $\etilde$ they match object identifiers.

The following lemma asserts the conditions on which a value in the syntactic calculus {is matched by a pairs of value and memory} in the
conventional one. 
\begin{lemma}\label{lemma:value}
Let $\val$ be such that  $\correspondence{\val}{\rho}{\ExpMem{\etilde}{\mu}}$ for some $\etilde$, $\rho$ and $\mu$. Then:
\begin{enumerate}
\item if $\val=\x$, then {either $\etilde=\iota$ for some $\iota$ and  $\rho(\iota)=\x$, or $\etilde=\x$}
\item if $\val=\BlockLab{\dvs}{\x}{\X}$, then $\etilde=\iota$ {and $\rho(\iota)=\x$}. 
 \end{enumerate}
\end{lemma}
\begin{proof}
\begin{enumerate}
\item {The judgment $\correspondence{\x}{\rho}{\ExpMem{\etilde}{\mu}}$ has been necessarily derived by rule either \rn{oid} or \rn{var}} in \refToFigure{correspondence}.
\item First observe that $\dvs$ cannot be empty and that $\Dec{\C}{\x}{\ConstrCall{\C}{\xs}}$ must be in $\dvs$. {The judgment}
$\correspondence{\BlockLab{\dvs}{\x}{\X}}{\rho}{\ExpMem{\etilde}{\mu}}$ {has been necessarily derived by} rule \rn{Block} in \refToFigure{correspondence}.
Therefore {$\correspondence{\dvs}{\rho}{\mu}$, and, since for each declaration in $\dvs$ we have applied rule $\rn{Ev-Dec}$,} all the
variables in $\dvs$ must be in the image of $\rho$ and $\etilde=\iota$ such that $\rho(\iota)=\x$.
\end{enumerate}
\end{proof}
{From the previous lemma we have that, if all the free variables of $\val$ are in the image of $\rho$, then $\val$ can only match an
object identifier.}

A step in the reduction of a {closed} expression $\e$ in the syntactic calculus can be simulated by a possibly empty sequence of reduction steps of 
the matching configuration $\ExpMem{\etilde}{\mu}$ in the conventional calculus. 
\begin{theorem}\label{theo:step}
{If $\FV{\e}=\emptyset$, $\correspondence{\e}{\rho}{\ExpMem{\etilde}{\mu}}$,} and $\reduce{\e}{\e'}$, then
$\creducestar{\ExpMem{\etilde}{\mu}}{\ExpMem{\etilde'}{\mu'}}$ such that $\correspondence{\e'}{\rho'}{\ExpMem{\etilde'}{\mu'}}$, for some $\etilde'$, $\rho'$, $\mu'$ such that $\rho\subseteq\rho'$, $\dom{\mu}\subseteq\dom{\mu'}$. 
\end{theorem}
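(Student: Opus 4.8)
The plan is to argue by cases on the last rule used to derive $\reduce{\e}{\e'}$; no induction is needed, since a single step is simulated by a bounded number of conventional steps. Apart from rule \rn{new}, which I would handle as a separate top-level case, I would write $\e=\Ctx{\e_0}$ and $\e'=\Ctx{\e_0'}$ by rule \rn{ctx}, with $\reduceOS{\e_0}{\e_0'}$ one of the computational rules and $\ctx$ maximal. \refToLemma{evCtx} then gives $\etilde=\CtxHat{\etilde_0}$ with $\correspondence{\ctx}{\rho}{\ExpMem{\ctxHat}{\mu}}$ and $\correspondence{\e_0}{\rho}{\ExpMem{\etilde_0}{\mu}}$, and, since $\e$ is closed, the remark after \refToLemma{evCtx} together with \refToLemma{value} tells me that every free variable of $\e_0$, and every block value occurring in $\e_0$, is matched by an object identifier. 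So in each case it is enough to exhibit $\creducestar{\ExpMem{\etilde_0}{\mu}}{\ExpMem{\etilde_0'}{\mu'}}$ with $\correspondence{\e_0'}{\rho'}{\ExpMem{\etilde_0'}{\mu'}}$ for suitable $\rho\subseteq\rho'$, $\dom{\mu}\subseteq\dom{\mu'}$; rule \rn{ctx} of the conventional calculus lifts this to $\creducestar{\ExpMem{\etilde}{\mu}}{\ExpMem{\CtxHat{\etilde_0'}}{\mu'}}$, and \refToLemma{evCtx} gives the matching of $\e'$, provided matching is monotone, i.e.\ $\correspondence{\ctx}{\rho}{\ExpMem{\ctxHat}{\mu}}$ survives replacing $\rho$ by an extension sending fresh object identifiers to fresh variables and $\mu$ by one that extends it, or differs from it only on object identifiers not occurring in $\ctx$. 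I would also need a substitution property: matching commutes with capture-avoiding substitution of a variable, and, more generally, substituting a closed block value in a syntactic term corresponds on the conventional side to substituting the object identifier matching it. Both are routine inductions on matching derivations. Throughout, I would read the matching relation of \refToFigure{correspondence} up to the congruences of the two calculi (so that rule \rn{block}, which expects the evaluated declarations in front, applies after \rn{reorder}, and $\Block{}{\etilde_0}$ is identified with $\etilde_0$).

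The computational rules split in two groups. The rules \rn{garbage}, \rn{move-dec}, \rn{move-body} and \rn{move-subterm} only discard unreachable evaluated declarations or move evaluated declarations between adjacent blocks. Since a sequence $\dvs$ of evaluated declarations is matched by $\mu$ and, by rule \rn{block}, contributes nothing to the conventional term, I expect each such step to be simulated by \emph{zero} conventional steps, taking $\etilde_0'=\etilde_0$, $\mu'=\mu$, $\rho'=\rho$; the only thing to check is that the block value reshaped by \rn{move-subterm} still matches the same object identifier, which follows from injectivity of $\rho$ and \refToLemma{value}.

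Each remaining rule I would simulate by exactly \emph{one} conventional step. Rule \rn{new} matches conventional \rn{new}: the arguments, being references free in $\e_0$, match object identifiers $\iotas$; a fresh $\iota$ is allocated, $\mu'$ is $\mu$ with $\iota\mapsto\ConstrCall{\C}{\iotas}$ added, and $\rho'$ is $\rho$ with $\iota\mapsto\x$ added, so that the new elementary block matches $\iota$ by rules \rn{block}, \rn{ev-dec}, \rn{oid}. Rules \rn{field-access}, \rn{field-assign} and \rn{invk} match the homonymous conventional rules: in each, the receiver variable is declared by an evaluated declaration in the surrounding block context, hence matches an object identifier $\iota$ with object state $\mu(\iota)$; for \rn{field-assign}, the in-place update of that (still evaluated) declaration matches the memory update $\UpdateMem{\mu}{\iota}{i}{\iota'}$, and the assumption that each variable is declared at most once ensures that the rest of the matching survives the change of $\mu$; for \rn{invk}, the two blocks generated match declaration-by-declaration, the shared method body matching itself. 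Finally, \rn{alias-elim} and \rn{affine-elim} both match conventional \rn{dec}: on the syntactic side the eliminated declaration $\Dec{\T}{\x}{\val}$ immediately follows the evaluated prefix $\dvs$, which has been absorbed into $\mu$, so the matching conventional block has $\Dec{\C}{\x}{\iota}$ in head position, where $\iota$ is an object identifier (the one matching the capsule, in the \rn{affine-elim} case, by \refToLemma{value}); \rn{dec} then performs the substitution, and the substitution property concludes.

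The hard part will be the memory-shuffling group and the treatment of capsules. For the first, I will need to turn the informal claim ``rearranging or garbage-collecting evaluated declarations is invisible on the conventional side'' into a precise invariant of the matching relation, stable under block nesting and under \rn{reorder}/\rn{block-elim}; this is exactly where the syntactic design earns its keep, and it is the step demanding the most care. For the second, I must keep track of the fact that an entire closed block value of the syntactic calculus corresponds to a single object identifier on the conventional side, so that the affine substitution of rule \rn{affine-elim} is correctly simulated by the plain variable-for-identifier substitution of conventional \rn{dec}; this rests on \refToLemma{value} and on the substitution property above.
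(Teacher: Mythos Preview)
Your overall architecture matches the paper's: split on \rn{ctx} versus top-level \rn{new}, decompose via \refToLemma{evCtx}, and handle each computational rule by exhibiting a short conventional reduction and re-establishing matching. Your treatment of \rn{field-access}, \rn{field-assign}, \rn{invk}, \rn{alias-elim}, \rn{affine-elim}, \rn{garbage}, \rn{move-body}, \rn{move-subterm} and \rn{new} is essentially what the paper does.

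There is, however, a genuine gap in your handling of \rn{move-dec}. You claim that every memory-shuffling step is simulated by \emph{zero} conventional steps with $\etilde_0'=\etilde_0$, $\mu'=\mu$, $\rho'=\rho$. This fails in the special case where, after moving $\dvs'$ out, the inner block collapses (via \rn{block-elim}) to a bare constructor call, i.e.\ $\congruence{\BlockLab{\decs}{\e_b}{\X}}{\ConstrCall{\C}{\x_1,\dots,\x_n}}$ and $\qualifier=\epsilon$. Before the step, the declaration $\Dec{\C}{\x}{\BlockLab{\dvs'\,\decs}{\e_b}{\X}}$ is \emph{not} a $\dv$, so rule \rn{dec} of \refToFigure{correspondence} applies and the conventional term retains a declaration $\Dec{\C}{\x}{\ConstrCall{\C}{\iota_1,\dots,\iota_n}}$ with $\x\notin\img{\rho}$. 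After the step, the declaration has become the evaluated $\Dec{\C}{\x}{\ConstrCall{\C}{\x_1,\dots,\x_n}}$, so the side condition of \rn{dec} fails and you are forced to use \rn{ev-dec}, which requires $\x\in\img{\rho'}$ and $\mu'(\rho'^{-1}(\x))=\ConstrCall{\C}{\iota_1,\dots,\iota_n}$. Hence $\rho'$ must strictly extend $\rho$ with a fresh $\iota\mapsto\x$, $\mu'$ must strictly extend $\mu$, and the conventional term must change (the declaration of $\x$ disappears, occurrences of $\x$ become $\iota$). None of this is achievable with zero steps; the paper handles exactly this case by performing conventional \rn{new} followed by \rn{dec}. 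Your slogan ``moving evaluated declarations contributes nothing to the conventional term'' is true of $\dvs'$ itself, but it misses the side effect that the \emph{enclosing} declaration changes status from non-evaluated to evaluated. You should split \rn{move-dec} into this case and the generic one, as the paper does.
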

\begin{proof}
If $\reduce{\e}{\e'}$, then either rule \rn{Ctx} or rule \rn{new} of \refToFigure{pure-red} were applied.\\
Consider first \underline{rule \rn{Ctx}}. Then $\e=\Ctx{\e_1}$, $\e'=\Ctx{\e'_1}$ and  $\reduceOS{\e_1}{\e_1'}$.
From $\correspondence{\e}{\rho}{\ExpMem{\etilde}{\mu}}$ and \refToLemma{evCtx}
we have that $\etilde=\CtxHat{\etilde_1}$ where $\correspondence{\ctx}{\rho}{\ExpMem{\ctxHat}{\mu}}$ and $\correspondence{{\e_1}}{\rho}{\ExpMem{\etilde_1}{\mu}}$.\\
By cases on the reduction rule of \refToFigure{pure-red} applied to reduce ${\e_1}$ to ${\e_1'}$. We consider
\rn{Field-Assign}, \rn{Affine-elim}, \rn{Move-Dec} and \rn{Move-subterm}. The proof for the other rules is similar.\\
Rule $\rn{Field-Assign}$. Then
\begin{enumerate}[(i)]
\item $\e_1=\BlockCtx{\FieldAssign{\x}{\f_i}{\y}}$, 
\item $\e'_1={\UpdateCtx{\y}{\x}{i}{\y}}$, 
\item $\extractDec{\blockCtx}{\x}=\ConstrCall{\C}{\x_1,\dots,\x_n}$, and
\item $\fields{\C}=\Field{\C_1}{\f_1}\dots\Field{\C_n}{\f_n}{\ \wedge\ 1\leq i\leq n}$.
\end{enumerate}
From $\correspondence{{\e_1}}{\rho}{\ExpMem{\etilde_1}{\mu}}$ and
\refToLemma{evCtx}
we have that $\etilde_1=\BlockCtxHat{\etilde_2}$ with  $\correspondence{\blockCtx}{\rho}{\ExpMem{\blockCtxHat}{\mu}}$  and $\correspondence{\FieldAssign{\x}{\f_i}{\y}}{\rho}{\ExpMem{\etilde_2}{\mu}}$ for some $\etilde_2$. 
Since $\e$ is a closed term we get that $\x$ and $\y$ are defined in the evaluated declarations of $\Ctx{\blockCtx}$. Therefore
$\etilde_2$ is $\FieldAssign{\iota}{\f_i}{\iota'}$ for some $\iota$ and $\iota'$  such that $\rho(\iota)=\x$ and  $\rho(\iota')=\y$. 
Moreover, from rules \rn{block} and \rn{ev-dec} and (iii){, it}
must be $\mu(\iota)=\ConstrCall{\C}{\iota_1,\dots,\iota_n}$ for some $\iota_i$ such that $\rho(\iota_i)=\x_i$ for $i\in 1,\dots,n$.
From (iv) {and} rule \rn{ctx} with premise \rn{invk} of \refToFigure{conventional} we get
$\creduce{\ExpMem{\BlockCtxHat{\FieldAssign{\iota}{\f_i}{\iota'}}}{\mu}}{\ExpMem{\BlockCtxHat{\iota'}}{{\UpdateMem{\mu}{\iota}{i}{\iota'}}}}$.
Since we have assumed no shadowing of variables, in $\ctx$ there is no declaration of $\x$ and so 
\begin{enumerate}[(a)]
\item $\correspondence{\ctx}{\rho}{\ExpMem{\ctxHat}{\UpdateMem{\mu}{\iota}{i}{\iota'}}}$.
\end{enumerate}
From  $\correspondence{\blockCtx}{\rho}{\ExpMem{\blockCtxHat}{\mu}}$, $\rho(\iota)=\x$, $\rho(\iota')=\y$ and  $\mu(\iota)=\ConstrCall{\C}{\iota_1,\dots,\iota_n}$ we get
\begin{enumerate}[(a)]\addtocounter{enumi}{1}
\item $\correspondence{\updateCtx{\y}{\x}{i}}{\rho}{\ExpMem{\blockCtxHat}{\UpdateMem{\mu}{\iota}{i}{\iota'}}}$.
\end{enumerate}
Finally from (a), (b), $\rho(\iota')=\y$ and
\refToLemma{evCtx} we derive that $\correspondence{\Ctx{{\UpdateCtx{\y}{\x}{i}{\y}}}}{\rho}{\ExpMem{\CtxHat{\BlockCtxHat{\iota'}}}{{\UpdateMem{\mu}{\iota}{i}{\iota'}}}}$.
\\
Rule $\rn{Affine-Elim}$. Then
\begin{enumerate}[(i)]
\item $\e_1=\BlockLab{\dvs\ \Dec{\Type{\capsule}{\C}}{\x}{\val}\ \decs }{\e_b}{X}$,
\item $\e'_1={\BlockLab{\dvs\ \Subst{\decs}{\val}{\x}}{\Subst{\e_b}{\val}{\x}}{\X\setminus\{\x\}}}$ and
\item $\IsCapsule{\val}$.
\end{enumerate}
From (iii) $\val$ is closed and so $\val=\BlockLab{\dvs'}{\y}{{\Y}}$ for some ${\Y}$, $\dvs'$, and $\y$. 
From $\correspondence{{\e_1}}{\rho}{\ExpMem{\etilde_1}{\mu}}$, (i), rule \rn{block} of \refToFigure{correspondence},
\refToLemma{value}.2 we have that 
\begin{enumerate}[(a)]
\item $\etilde_1=\Block{\Dec{\Type{}{\C}}{\x}{\iota}\ \decstilde }{\etilde_b}$, $\correspondence{\dvs\,\dvs'}{\rho}{\mu}$, $\correspondence{\decs}{\rho}{\ExpMem{\decstilde}{\mu}}$,
$\correspondence{\e_b}{\rho}{\ExpMem{\etilde_b}{\mu}}$,
\item $\rho(\iota)=\y$  and $\correspondence{\val}{\rho}{\ExpMem{\iota}{\mu}}$.
\end{enumerate}
Applying reduction rule \rn{Dec} of \refToFigure{conventional} we get $\creduce{\ExpMem{\Block{\Dec{\Type{}{\C}}{\x}{\iota}\ \decstilde }{\etilde_b}}{\mu}}{\ExpMem{\Subst{\Block{\decstilde}{\etilde_b}}{\iota}{\x}}{\mu}}$. Since $\Dec{\Type{\capsule}{\C}}{\x}{\val}$ is not an
evaluated declaration $\x\not\in\img{\rho}$. Therefore the occurrence of $\x$ in $\e_1$ is in the
matching relation with an occurrence of $\x$ in $\etilde_1$ and there is only one such occurrence. 
Therefore from (a) and (b) we have 
\begin{enumerate}[(a)]\addtocounter{enumi}{2}
\item $\correspondence{{\BlockLab{\dvs\ \Subst{\decs}{\val}{\x}}{\Subst{\e_b}{\val}{\x}}{\X\setminus\{\x\}}}}{\rho}{\ExpMem{\Subst{\Block{\decstilde}{\etilde_b}}{\iota}{\x}}{\mu}}$. 
\end{enumerate}
From $\correspondence{\ctx}{\rho}{\ExpMem{\ctxHat}{\mu}}$, (i), (a), (c) and \refToLemma{evCtx} we derive that 
$\correspondence{\Ctx{\e_1}}{\rho}{\ExpMem{\Ctx{\etilde_1}}{\mu}}$.
\\
Rule $\rn{Move-Dec}$. 
Then
\begin{enumerate}[(i)]
\item $\e_1={\BlockLab{{\dvs}\ \Dec{{\Type{\qualifier}{\C}}}{\x}{\BlockLab{{\dvs'}\ {\decs}}{\e_b}{\X}}\ \decs'}{\e'}{{\Y}}}$  and
\item $\e'_1={\BlockLab{{\dvs}\ {\dvs'}\ \Dec{{\Type{\qualifier}{\C}}}{\x}{\BlockLab{{\decs}}{\e_b}{\X}}\ \decs'}{\e'}{{\Y}}}$ 
\end{enumerate}
We consider two cases: 
\begin{enumerate}
\item  either $\congruence{{\BlockLab{\decs}{\e_b}{\X}}}{\ConstrCall{\C}{\x_1,\dots,\x_n}}$ for some $x_1,\dots,\x_n$ 
\item {or this is not the case.}
\end{enumerate}
Case 1. Then 
\begin{enumerate}[(i)]\addtocounter{enumi}{2}
\item $\e_1={\BlockLab{{\dvs}\ \Dec{{\Type{\qualifier}{\C}}}{\x}{\BlockLab{{\dvs'}}{\ConstrCall{\C}{\x_1,\dots,\x_n}}{\X}}\ \decs'}{\e'}{{\Y}}}$ and 
\item $\e'_1={\BlockLab{{\dvs}\ {\dvs'}\ \Dec{{\Type{\qualifier}{\C}}}{\x}{\ConstrCall{\C}{{\x_1,\dots,\x_n}}}\ \decs'}{\e'}{{\Y}}}$. 
\end{enumerate}
From $\correspondence{\e_1}{\rho}{\ExpMem{\etilde_1}{\mu}}$, \refToLemma{evCtx} and rules \rn{new} and \rn{block} of \refToFigure{correspondence}
we have that 
\begin{enumerate}[(a)]
\item $\etilde_1={\Block{\Dec{{\Type{}{\C}}}{\x}{{\ConstrCall{\C}{\iota_1,\dots,\iota_n}}}\ \decstilde'}{\etilde'}}$,
\item $\correspondence{\dvs\ \dvs'}{\rho}{\mu}$, $\correspondence{\decs'}{\rho}{\ExpMem{\decstilde'}{\mu}}$, $\correspondence{\e'}{\rho}{\ExpMem{\etilde'}{\mu}}$ and
\item {$\rho(\iota_i)=\x_i$} for $i\in 1,\dots,n$.
\end{enumerate}
Applying reduction rule \rn{new} of \refToFigure{conventional} we have that
$\creduce{\ExpMem{\ConstrCall{\C}{\iota_1,\dots,\iota_n}}{\mu}}{\ExpMem{\iota}{\mu'}$ where $\iota\not\in\dom{\mu}$ and $\mu'=\Subst{\mu}{\ConstrCall{\C}{\iota_1,\dots,\iota_n}}{\iota}}$. Then, applying rule  \rn{block} {of \refToFigure{conventional}} we get $\creduce{\ExpMem{\Block{\Dec{\C}{\x}{\iota}\ \decstilde'}{\etilde'}}{\mu'}}{\ExpMem{\Block{\Subst{\decstilde'}{\iota}{\x}}{\Subst{\etilde'}{\iota}{\x}}}{\mu'}}$. Let $\rho'=\Subst{\rho}{\x}{\iota}$, from (iv),
(b), (c) and rule \rn{block} of \refToFigure{correspondence} we get that $\correspondence{\e'_1}{\rho'}{\ExpMem{\Block{\Subst{\decstilde'}{\iota}{\x}}{\Subst{\etilde'}{\iota}{\x}}}{\mu'}}$, and so by \refToLemma{evCtx} we derive $\correspondence{\Ctx{\e'_1}}{\rho'}{\ExpMem{\CtxHat{\Block{\Subst{\decstilde'}{\iota}{\x}}{\Subst{\etilde'}{\iota}{\x}}}}{\mu'}}$ where $\rho\subseteq\rho'$ and $\dom{\mu}\subseteq\dom{\mu'}$.\\
Case 2. In this case from 
$\correspondence{\e_1}{\rho}{\ExpMem{\etilde_1}{\mu}}$, \refToLemma{evCtx} and rule  \rn{block} of \refToFigure{correspondence}
we get 
\begin{enumerate}[(a)]\addtocounter{enumi}{3}
\item $\etilde_1={\Block{\Dec{{\Type{}{\C}}}{\x}{\widehat{\Block{\decs}{\e_b}}}\ \decstilde'}{\etilde'}}$ where 
\item $\correspondence{\dvs\ \dvs'}{\rho}{\mu}$, $\correspondence{\decs'}{\rho}{\ExpMem{\decstilde'}{\mu}}$, $\correspondence{\e'}{\rho}{\ExpMem{\etilde'}{\mu}}$, and $\correspondence{{\Block{\decs}{\e_b}}}{\rho}{\ExpMem{\widehat{\Block{\decs}{\e_b}}}}{\mu}$ 
\end{enumerate}
From (ii), (d), (e) and rule \rn{block} of \refToFigure{correspondence} we derive that  $\correspondence{\e'_1}{\rho}{\ExpMem{\etilde'_1}{\mu}}$ and by \refToLemma{evCtx} we have $\correspondence{\Ctx{\e'_1}}{\rho}{\ExpMem{\CtxHat{\etilde'_1}}{\mu}}$. Since $\ExpMem{\CtxHat{\etilde'_1}}{\mu}$ reduces
in 0 steps to itself we get the result.\\
Rule $\rn{Move-Subterm}$. We consider the value context $\ConstrCall{\C}{{\x_1,\ldots,\x_n},\emptyctx,{\es}}$. The other contexts are similar and easier.
Then
\begin{enumerate}[(i)]
\item $\e_1=\ConstrCall{\C}{{\x_1,\ldots,\x_n},\BlockLab{{\dvs}\ {\dvs'}}{{\x}}{\X},{\es}}$ and
\item $\e'_1={\BlockLab{\dvs}{\ \ConstrCall{\C}{{\x_1,\ldots,\x_n},\BlockLab{{\dvs'}}{{\x}}{\X},{\es}}}{\X\cap\dom{\dvs}}}$.
\end{enumerate}
From $\correspondence{{\e_1}}{\rho}{\ExpMem{\etilde_1}{\mu}}$, (i), rule \rn{new} and rule \rn{block} of \refToFigure{correspondence},
we have that
 \begin{enumerate}[(a)]%\addtocounter{enumi}{2}
 \item $\etilde_1=\ConstrCall{\C}{\iota_1,\dots,\iota_n,\iota,\estilde}$, 
 \item $\correspondence{\dvs\ \dvs'}{\rho}{\mu}$, $\correspondence{{\x_1,\ldots,\x_n}}{\rho}{\ExpMem{\iota_1,\dots,\iota_n}{\mu}}$, $\correspondence{\BlockLab{{\dvs}\ {\dvs'}}{{\x}}{\X}}{\rho}{\ExpMem{\iota}{\mu}}$ and $\correspondence{{\es}}{\rho}{\ExpMem{\estilde}{\mu}}$.
 \end{enumerate}
 From (b) we derive that $\correspondence{\BlockLab{{\dvs'}}{{\x}}{\X}}{\rho}{\ExpMem{\iota}{\mu}}$. Therefore 
 \begin{enumerate}[(a)]\addtocounter{enumi}{2}
  \item $\correspondence{{\BlockLab{\dvs}{\ \ConstrCall{\C}{{\x_1,\ldots,\x_n},\BlockLab{{\dvs'}}{{\x}}{\X},{\es}}}{\X\cap\dom{\dvs}}}}{\rho}{\ExpMem{\ConstrCall{\C}{\iota_1,\dots,\iota_n,\iota,\estilde}}{\mu}}$. 
  \end{enumerate}
From $\correspondence{\ctx}{\rho}{\ExpMem{\ctxHat}{\mu}}$, (i), (a) and \refToLemma{evCtx} we derive that 
$\correspondence{\Ctx{e_1}}{\rho}{\ExpMem{{\CtxHat{\etilde_1}}} {\mu}}$. Since $\ExpMem{{\CtxHat{\etilde_1}}} {\mu}$ 
reduces in 0 steps
to itself we get the result.\\
Consider now \underline{rule \rn{New}}. Then 
\begin{enumerate}[(i)]
\item $\e= \Ctx{\ConstrCall{\C}{\x_1,\dots,\x_n}} $ and
\item $\e'= \Ctx{\BlockLab{\Dec{\C}{\x}{\ConstrCall{\C}{\x_1,\dots,\x_n}}}{\x}{{\{\x\}}}}$.
 \end{enumerate}
 From $\correspondence{\e}{\rho}{\ExpMem{\etilde}{\mu}}$, \refToLemma{evCtx} and rule \rn{new} of \refToFigure{correspondence}
we have that 
\begin{enumerate}[(a)]
\item $\etilde=\CtxHat{\ConstrCall{\C}{\iota_1,\dots,\iota_n}}$ ,
\item  $\correspondence{\ctx}{\rho}{\ExpMem{\ctxHat}{\mu}}$ and 
{$\rho(\iota_i)=\x_i$} for $i\in 1,\dots,n$. 
\end{enumerate}
Applying rule \rn{new} of \refToFigure{conventional} we have that
$\creduce{\ExpMem{\ConstrCall{\C}{\iota_1,\dots,\iota_n}}{\mu}}{\ExpMem{\iota}{\mu'}}$ where 
$\mu'=\Subst{\mu}{\ConstrCall{\C}{\iota_1,\dots,\iota_n}}{\iota}$ and $\iota\not\in\dom{\mu}$. Let $\rho'=\Subst{\rho}{\x}{\iota}$,
from (ii) and rule \rn{block} of \refToFigure{correspondence} we get that 
$ \correspondence{\e'}{\rho'}{\ExpMem{\iota}{\mu'}} $ with $\rho\subseteq\rho'$ and $\dom{\mu}\subseteq\dom{\mu'}$, which proves the result.
\end{proof}

\begin{corollary}\label{preservation}
{If $\FV{\e}=\emptyset$,} $\reducestar{\e}{\val}$, and $\correspondence{\e}{\rho}{\ExpMem{\etilde}{\mu}}$, then $\creducestar{\ExpMem{\etilde}{\mu}}{\ExpMem{\iota}{\mu'}}$ with $\correspondence{\val}{\rho'}{\ExpMem{\iota}{\mu'}}$ for some $\rho', \mu'$ such that $\rho\subseteq\rho'$, and $\dom{\mu}\subseteq\dom{\mu'}$.
\end{corollary}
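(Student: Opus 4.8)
The plan is to proceed by induction on the length $n$ of the reduction sequence $\reducestar{\e}{\val}$, using \refToTheorem{step} for the inductive step. One preliminary fact is needed, namely that reduction in the syntactic calculus never introduces free variables: $\reduce{\e}{\e'}$ implies $\FV{\e'}\subseteq\FV{\e}$. This is checked rule by rule on \refToFigure{pure-red}, the point being that all substitutions are capture-avoiding and that the side conditions of \rn{field-access}, \rn{field-assign}, \rn{move-dec}, \rn{move-body} and \rn{move-subterm} are exactly what prevents scope extrusion; in particular, in \rn{field-access} the variable $\x_i$ already occurs in $\blockCtx$. Consequently, if $\FV{\e}=\emptyset$ then every term reachable from $\e$ is again closed, so \refToTheorem{step} and the induction hypothesis can be re-applied along the sequence.

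For the base case $n=0$ we have $\e=\val$ and $\FV{\val}=\emptyset$. A value is either a variable $\x$ or a block value $\BlockLab{\dvs}{\x}{\X}$; the former is impossible, since $\FV{\x}=\{\x\}\neq\emptyset$, so $\val=\BlockLab{\dvs}{\x}{\X}$, which is in fact a capsule. By \refToLemma{value}.2, $\correspondence{\val}{\rho}{\ExpMem{\etilde}{\mu}}$ forces $\etilde=\iota$ with $\rho(\iota)=\x$. Taking $\rho'=\rho$, $\mu'=\mu$ and the empty $\creducestar{}{}$-sequence $\creducestar{\ExpMem{\iota}{\mu}}{\ExpMem{\iota}{\mu}}$ establishes the claim, with $\rho\subseteq\rho'$ and $\dom{\mu}\subseteq\dom{\mu'}$ holding trivially.

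For the inductive case $n>0$, write the reduction as $\reduce{\e}{\e_1}$ followed by $\reducestar{\e_1}{\val}$ of length $n-1$. By the preliminary fact $\FV{\e_1}=\emptyset$. Applying \refToTheorem{step} to $\reduce{\e}{\e_1}$ and $\correspondence{\e}{\rho}{\ExpMem{\etilde}{\mu}}$ gives $\etilde_1$, $\rho_1$, $\mu_1$ with $\creducestar{\ExpMem{\etilde}{\mu}}{\ExpMem{\etilde_1}{\mu_1}}$, $\correspondence{\e_1}{\rho_1}{\ExpMem{\etilde_1}{\mu_1}}$, $\rho\subseteq\rho_1$ and $\dom{\mu}\subseteq\dom{\mu_1}$. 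The induction hypothesis applied to $\e_1$ then yields $\creducestar{\ExpMem{\etilde_1}{\mu_1}}{\ExpMem{\iota}{\mu'}}$ with $\correspondence{\val}{\rho'}{\ExpMem{\iota}{\mu'}}$, $\rho_1\subseteq\rho'$ and $\dom{\mu_1}\subseteq\dom{\mu'}$. Concatenating the two $\creducestar{}{}$-sequences and using transitivity of set inclusion gives $\creducestar{\ExpMem{\etilde}{\mu}}{\ExpMem{\iota}{\mu'}}$ with $\rho\subseteq\rho'$ and $\dom{\mu}\subseteq\dom{\mu'}$, as required.

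There is no substantial obstacle: the corollary is just the reflexive–transitive closure of \refToTheorem{step}. The only things requiring attention are (i) the auxiliary fact that closedness is preserved under reduction, which is what licenses iterating \refToTheorem{step} and the induction hypothesis, and (ii) the base case, where one uses \refToLemma{value} to see that a closed value is necessarily a block value and hence is matched by an object identifier in the conventional calculus.
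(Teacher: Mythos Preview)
Your proof is correct and follows essentially the same route as the paper: induction on the length of the reduction, \refToLemma{value}.2 for the base case, and \refToTheorem{step} plus the induction hypothesis for the step. Your treatment is in fact slightly more careful than the paper's, which silently assumes that the intermediate term $\e'$ is again closed when invoking the induction hypothesis; you make this explicit via the auxiliary observation that $\reduce{\e}{\e'}$ implies $\FV{\e'}\subseteq\FV{\e}$.
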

\begin{proof}
By arithmetic induction on the number of steps.
\begin{description}
\item[Base] If $\reduceN{\e}{\val}{0}$, then $\e=\val$, and $\correspondence{\val}{\rho}{\ExpMem{\etilde}{\mu}}$. Since $\e$ has no free variables, we have that $\val=\BlockLab{\dvs}{\x}{\X}$ for some $\X$, $\dvs$, and $\x$. From \refToLemma{value}.2, $\etilde=\iota$
for some $\iota$ such that $\rho(\iota)=\x$. Therefore $\ExpMem{\iota}{\mu}$ reduces in zero steps to $\ExpMem{\iota}{\mu}$ and the thesis holds.
\item[Inductive step] If $\reduceN{\e}{\val}{n+1}$, then $\reduce{\e}{\e'}$ and $\reduceN{\e'}{\val}{n}$. By inductive hypothesis we have that, if $\correspondence{\e'}{\rho'}{\ExpMem{\etilde'}{\mu'}}$, then $\creducestar{\ExpMem{\etilde'}{\mu'}}{\ExpMem{\iota}{\mu''}}$ and 
$\correspondence{\val}{\rho''}{\ExpMem{\iota}{\mu''}}$, for some $\rho'',\mu''$ such that $\rho'\subseteq\rho''$, $\dom{\mu'}\subseteq\dom{\mu''}$. 
Then, the thesis follows by \refToTheorem{step}. 
\end{description}
\end{proof}

% !TEX root =main.tex

The crucial point  for the preservation theorem is that the substitution semantics, modeling ``moving'' capsules from a location to another in the memory, see rule \rn{affine-elim}, is indeed equivalent to the conventional semantics. Note that this only holds for variables which are used at most once. Consider, for instance, the following reduction sequence in the syntactic calculus, where we omit block annotations for simplicity {and we mention rule \rn{Ctx} only when the context is different from $\emptyctx$}:
\begin{quote}
\begin{math}
\begin{array}{ll}
{\Block{\Dec{\Type{\capsule}{\C}}{\x}{\ConstrCall{\C}{0}}}{\FieldAccess{\x}{\f}}\longrightarrow}&\rn{new} \\
\BlockLab{\Dec{\Type{\capsule}{\C}}{\x}{\BlockLab{\Dec{\C}{\y}{\ConstrCall{\C}{0}}}{\y}{}}}{\FieldAccess{\x}{\f}}{}\longrightarrow&\rn{affine-elim}\\
\FieldAccess{\BlockLab{\Dec{\C}{\y}{\ConstrCall{\C}{0}}}{\y}{}}{\f}\longrightarrow&\rn{move-subterm}\\
\BlockLab{\Dec{\C}{\y}{\ConstrCall{\C}{0}}}{\FieldAccess{\y}{\f}}{}\longrightarrow&\rn{field-access}\\
\BlockLab{\Dec{\C}{\y}{\ConstrCall{\C}{0}}}{0}{}\longrightarrow&\rn{garbage}\\
0
\end{array}
\end{math}
\end{quote}
In the conventional calculus, we get the following corresponding reduction sequence:%, through $\rho(\iota)=\y$:
\begin{quote}
\begin{math}
\begin{array}{ll}
{\ExpMem{\Block{\Dec{{\C}}{\x}{\ConstrCall{\C}{0}}}{\FieldAccess{\x}{\f}}}{\emptyset}\Longrightarrow}&\rn{new}+\rn{ctx} \\
\ExpMem{\Block{\Dec{\C}{\x}{\iota}}{\FieldAccess{\x}{\f}}}{\iota\mapsto\ConstrCall{\C}{0}}\Longrightarrow&\rn{dec}\\
\ExpMem{\FieldAccess{\iota}{\f}}{\iota\mapsto\ConstrCall{\C}{0}}\Longrightarrow^0\\
\ExpMem{\FieldAccess{\iota}{\f}}{\iota\mapsto\ConstrCall{\C}{0}}\Longrightarrow&\rn{field-access}\\
\ExpMem{0}{\iota\mapsto\ConstrCall{\C}{0}}\Longrightarrow^0\\
\ExpMem{0}{\iota\mapsto\ConstrCall{\C}{0}}
\end{array}
\end{math}
\end{quote}
Consider, instead, a similar example where the affine variable is used twice:

\begin{quote}
\begin{math}
\begin{array}{lll}
\e=&\BlockLab{\Dec{\Type{\capsule}{\C}}{\x}{\BlockLab{\Dec{\C}{\y}{\ConstrCall{\C}{0}}}{\y}{}}}{\Sequence{\FieldAssign{\x}{\f}{3}}{\FieldAccess{\x}{\f}}}{}\longrightarrow&\rn{affine-elim}\\
\e'=&\Sequence{\FieldAssign{\BlockLab{\Dec{\C}{\y}{\ConstrCall{\C}{0}}}{\y}{}}{\f}{3}}{\FieldAccess{\BlockLab{\Dec{\C}{\y}{\ConstrCall{\C}{0}}}{\y}{}}{\f}}\longrightarrow&\rn{move-subterm}+\rn{ctx}\\
&\Sequence{\BlockLab{\Dec{\C}{\y}{\ConstrCall{\C}{0}}}{\FieldAssign{\y}{\f}{3}}{}}{\FieldAccess{\BlockLab{\Dec{\C}{\y}{\ConstrCall{\C}{0}}}{\y}{}}{\f}}\longrightarrow&\rn{field-assign}+\rn{ctx}\\
&\Sequence{\BlockLab{\Dec{\C}{\y}{\ConstrCall{\C}{3}}}{3}{}}{\FieldAccess{\BlockLab{\Dec{\C}{\y}{\ConstrCall{\C}{0}}}{\y}{}}{\f}}\longrightarrow&\rn{garbage}+\rn{ctx}\\
&\Sequence{3}{\FieldAccess{\BlockLab{\Dec{\C}{\y}{\ConstrCall{\C}{0}}}{\y}{}}{\f}}\longrightarrow&\rn{garbage}\\
&\FieldAccess{\BlockLab{\Dec{\C}{\y}{\ConstrCall{\C}{0}}}{\y}{}}{\f}\longrightarrow^\star&\mbox{as above}\\
&0
\end{array}
\end{math}
\end{quote}
{(To understand the second \rn{garbage} reduction, recall that $\Sequence{\e}{\e'}$ is an abbreviation for $\Block{\Dec{\T}{\x}{\e}}{\e'}$ if $\x$ not free in $\e'$.)}

The term $\etilde$ equivalent to $\e$ through $\rho(\iota)=\y$, with the conventional semantics, reduces to  $3$:
\begin{quote}
\begin{math}
\begin{array}{lll}
\etilde=&\ExpMem{\Block{\Dec{\C}{\x}{\iota}}{\Sequence{\FieldAssign{\x}{\f}{3}}{\FieldAccess{\x}{\f}}}}{\iota\mapsto\ConstrCall{\C}{0}}\Longrightarrow&\rn{dec}\\
\etilde'=&\ExpMem{\Sequence{\FieldAssign{\iota}{\f}{3}}{\FieldAccess{\iota}{\f}}}{\iota\mapsto\ConstrCall{\C}{0}}\Longrightarrow&\rn{field-assign}\\
&\ExpMem{\Sequence{3}{\FieldAccess{\iota}{\f}}}{\iota\mapsto\ConstrCall{\C}{3}}\Longrightarrow&\rn{garbage}\\
&\ExpMem{\FieldAccess{\iota}{\f}}{\iota\mapsto\ConstrCall{\C}{3}}\Longrightarrow&\rn{field-access}\\
&\ExpMem{3}{\iota\mapsto\ConstrCall{\C}{3}}
\end{array}
\end{math}
\end{quote}
Hence, in this case the two  reduction sequences are \emph{not} equivalent, and \refToTheorem{step} does not hold.
Indeed, the first reduction step in the syntactic calculus does \emph{not} preserve the matching relation.  Notably, there is no way to find a mapping making terms $\e'$ and $\etilde'$   to match, since such mapping should map $\iota$ in two different variables, one declared in the first block and one in the second (the fact that they are two different variables can be made explicit by $\alpha$-renaming term $\e'$). In more informal words, the substitution semantics \emph{duplicates memory} {(rather than just moving)} if adopted for non-affine variables. 
% !TEX root =main.tex

\section{Conclusion}\label{sect:conclu}
In this paper we presented a calculus for an imperative object-oriented language whose distinguished features are the following.
\begin{itemize}
\item Local variable declarations are used to directly represent the memory. That is, a declared {(non affine)} variable is not replaced by its value, as in standard \texttt{let}, but the association is kept and used when necessary.\item In this way, there are language values (block values) which represent (a portion of) memory, and the fact that such portion of memory is isolated can be \emph{modularly} checked by inspecting  only the value itself, without any need to explore the whole graph structure of the global memory as it would be in the conventional model.
\item To safely handle capsules, the syntactic calculus supports \emph{affine} variables with a special semantics. Runtime checks ensure that their initializing value is a capsule, and their (unique by definition) occurrence is replaced by their capsule value (rule \rn{affine-elim}). 
\item  {\emph{Block annotations} allow a variable declaration to be moved outside of a block only if such variable will \emph{not} be part of its final result.
This additional runtime check prevents to ``break'' capsules before they are moved.}
\end{itemize}

In previous work \cite{CapriccioliSZ15,ServettoZucca15,GianniniSZ16,GianniniSZ17,GianniniSZ18,GianniniSZC19b,GianniniRSZ19a}, we have designed type systems which statically ensure that such runtime checks succeed, hence execution is not stuck. In this paper we {prove}
that the syntactic calculus preserves the standard semantics of imperative calculi relying on 
a global memory. In such calculi reasoning about program properties such as sharing requires the formalization of invariants
on the memory and the proof of their preservation under reduction, whereas in ours this can be done by structural
induction on terms. 

The wider context of the paper is the huge amount of research on mechanisms for controlling sharing and interference. In this paper we focus on the property that the the subgraph reachable from 
a reference $\x$ is an isolated portion of store, that is, all its (non immutable) nodes
can be reached only through this reference. This property has many variants in literature \cite{ClarkeWrigstad03,Almeida97,Hogg91,DietlEtAl07,GordonEtAl12}.
Examples of other relevant properties of referenes studied in literature are the following:
\begin{itemize}
\item $\x$ is \emph{immutable}, that is, the subgraph reachable from 
$\x$ is an \emph{immutable} portion of store. An immutable reference can be safely shared
in a multithreaded environment.
\item $\x$ is \emph{lent}, that is, the subgraph reachable from $\x$ can be manipulated, but not shared, by a client \cite{ServettoZucca15,GianniniSZ16,GianniniSZ18}. This is also called \emph{borrowed} in literature \cite{Boyland01,NadenEtAl12}. 
\item $\x$ is \emph{read-only} if no modification is permitted through $\x$. Note that there is no immutability guarantee.
\end{itemize}

A specular approach to the use of qualifiers to restrict the usage of references is that on \emph{ownership} (see an overview in~\cite{ClarkeEtAl13}), where a formal way is provided to express and prove the ownership invariants. Among the many works in this stream, we mention Rust \cite{Turon17}, which uses ownership {and qualifiers} for memory management.

In future work we plan to add the modelling of immutable
references. We will also investigate (a form of) Hoare logic on top of our model. 

\bigskip

\noindent
{\bf Acknowledgements}\\
We would like to thank the anonymous DCM referees for their helpful comments.

\end{document}